\documentclass[submission,copyright,creativecommons,a4paper]{eptcs}
\usepackage{breakurl}             
\usepackage{underscore}           

\usepackage{tikzfig}
\input{preamble.tex}

\title{ZH: A Complete Graphical Calculus for Quantum Computations Involving Classical Non-linearity}
\author{Miriam Backens
\institute{Department of Computer Science\\
University of Oxford}
\email{miriam.backens@cs.ox.ac.uk}
\and
Aleks Kissinger
\institute{Institute for Computing and Information Sciences\\
Radboud University}
\email{aleks@cs.ru.nl}
}

\begin{document}

\maketitle

\begin{abstract}
We present a new graphical calculus that is sound and complete for a universal family of quantum circuits, which can be seen as the natural string-diagrammatic extension of the approximately (real-valued) universal family of Hadamard+CCZ circuits. The diagrammatic language is generated by two kinds of nodes: the so-called `spider' associated with the computational basis, as well as a new arity-$N$ generalisation of the Hadamard gate, which satisfies a variation of the spider fusion law. Unlike previous graphical calculi, this admits compact encodings of non-linear classical functions. For example, the AND gate can be depicted as a diagram of just 2 generators, compared to $\sim25$ in the ZX-calculus. Consequently, $N$-controlled gates, hypergraph states, Hadamard+Toffoli circuits, and diagonal circuits at arbitrary levels of the Clifford hierarchy also enjoy encodings with low constant overhead. This suggests that this calculus will be significantly more convenient for reasoning about the interplay between classical non-linear behaviour (e.g. in an oracle) and purely quantum operations. After presenting the calculus, we will prove it is sound and complete for universal quantum computation by demonstrating the reduction of any diagram to an easily describable normal form.
\end{abstract}

\section{Introduction}

Graphical calculi enable reasoning about quantum computation in an intuitive yet rigorous way.
Calculi based on string diagrams are more flexible than circuit-style languages, allowing the description of states and measurement projections as well as unitary operations in one unified framework.
Their rigour is ensured by the category-theoretical underpinnings \cite{SelingerCPM}.
The best-known of these graphical languages is the ZX-calculus, which was first introduced 10 years ago \cite{CD1,CD2}.
It is built around the interactions of two complementary bases, the computational basis and the Hadamard basis, which are graphically represented by so-called \emph{spiders}.
A related formalism is the ZW-calculus \cite{hadzihasanovic2017thesis}, which is built around the interactions of generators related to the two different types of three-qubit entangled states: GHZ states and $W$ states.

Here, we introduce a new graphical language called the \textit{ZH-calculus}, which roughly follows this analogy with multipartite entanglement:
\begin{center}
	\textit{ZX-calculus} : \textit{ZH-calculus} ::
	\textit{graph states} : \textit{hypergraph states}
\end{center}
Graph states are the basic resource for the one-way model of measurement-based quantum computation~\cite{MBQC2}, and have been studied extensively using the ZX-calculus~\cite{CD2,DP1,DP2,RossMBQC}.
Hypergraph states were introduced in~\cite{rossi2013hypergraph} as a generalisation of graph states, and have recently gathered some interest due, for example, to the role they play in quantum search algorithms~\cite{HyperGrover}, exponentially-growing Bell violations~\cite{gachechiladze2016extreme}, and universal measurement-based quantum computation~\cite{HyperSPTO}.

Like the ZX- and ZW-calculi, the ZH-calculus includes a family of ``Z-spiders'' associated with the computational basis. However, its point of departure is the inclusion of ``H-boxes'', which are $n$-ary generalisations of the Hadamard gate satisfying a variation of the spider fusion law, much like the one satisfied by $W$-spiders in the ZW-calculus.\footnote{Despite satisfying a similar variation of the spider fusion rule, this generalisation of the Hadamard node is different from that employed in the original version of the Y-calculus \cite[Definition 2 of Version 1]{jeandel2018y-calculus}.} Whereas Hadamard gates are used to introduce edges between 2 vertices in a graph state, H-boxes can introduce hyperedges between $n$ vertices in a hypergraph state. Seen from another perspective, H-boxes are closely related to both $n$-ary AND gates in classical logic and to the Hadamard-CCZ family of quantum circuits.
As a result, Boolean circuits can be encoded in the ZH-calculus with low constant overhead. In particular, the linear maps corresponding to classical AND and NOT gates can be depicted as follows in terms of the ZH calculus:
\ctikzfig{logic}
While the unitary NOT gate has a simple expression in the ZX-calculus, a typical encoding of an AND gate requires $25$ basic generators and non-Clifford phases (cf.~\cite{CKbook}, \S12.1.3.1).
Similarly, multiply-controlled phase gates also have very succinct representations, indicating that the ZH-calculus may be useful for analysing Hadamard-CCZ circuits (a.k.a. Hadamard-Toffoli circuits~\cite{ShiToffoli,aharonov2003hadamardtoffoli}, cf. forthcoming~\cite{Niel2018} for connection to ZH), as well as diagonal operators at any level of the Clifford hierarchy~\cite{DiagHierarchy}.

Our main theorem is the ZH-calculus is complete with respect to its standard interpretation as matrices. That is, if two ZH-diagrams describe the same matrices, then they are provably equal using the rules of the ZH-calculus. Perhaps one of the most appealing features of the calculus is the simplicity of this completeness proof. The core of the proof (section~\ref{s:completeness}) fits on 4 pages, where only especially intricate lemmas---which appear in Appendix~\ref{sec:disconnect}---were done within the proof assistant Quantomatic~\cite{quanto-cade}. This is due to two main factors. The first is the extensive use of \textit{!-box notation}~\cite{kissinger2014pattern}, which gives an elegant way to reason about diagrams which have arbitrarily-large fan-out-type behaviour. The second is a unique normal form for the ZH-calculus, which expresses any matrix as a Schur product -- i.e.\ entrywise product -- of elementary matrices with the property that all but one entry of each matrix is 1.
This multiplicative construction contrasts with the additive construction of the normal form in the ZW-calculus \cite{hadzihasanovic2017thesis}, which arises as a sum of elementary matrices with the property that all but one entry of each matrix is 0.
For example the normal form of the diagram corresponding to the matrix $\left(\begin{smallmatrix}a&b\\c&d\end{smallmatrix}\right)$ is effectively constructed as follows, where the left-hand side represents the approach in the ZH-calculus with $*$ denoting entrywise multiplication, and the right-hand side represents the approach in the ZW-calculus:
\[
 \begin{pmatrix}a&1\\1&1\end{pmatrix} * \begin{pmatrix}1&b\\1&1\end{pmatrix} * \begin{pmatrix}1&1\\c&1\end{pmatrix} * \begin{pmatrix}1&1\\1&d\end{pmatrix}
 = \begin{pmatrix}a&b\\c&d\end{pmatrix}
 = \begin{pmatrix}a&0\\0&0\end{pmatrix} + \begin{pmatrix}0&b\\0&0\end{pmatrix} + \begin{pmatrix}0&0\\c&0\end{pmatrix} + \begin{pmatrix}0&0\\0&d\end{pmatrix}.
\]
Unlike the completeness proofs for universal versions of the ZX-calculus \cite{LoriaCompleteness,OxfordCompleteness}, which make use of the ZW-completeness proof via suitable translations between the two respective languages, our proofs of soundness, completeness, and universality are self-contained and don't rely on encoding into another calculus.

The paper is structured as follows. The generators and relations of the ZH-calculus are introduced in Section~\ref{s:ZH-dfn} and the calculus is proved to be universal and sound. The completeness proof is given in Section~\ref{s:completeness}. In section~\ref{s:applications} we survey two potential applications and comment on future work. Omitted proofs and a link to the Quantomatic project used to prove Lemmas~\ref{lem:disconnect-4} and \ref{lem:big-disconnect} are given in the appendix.

\section{Definition of the ZH-calculus}
\label{s:ZH-dfn}

The ZH-calculus is a graphical language expressing operations as \emph{string diagrams}.
These are diagrams consisting of dots or boxes, connected by wires.
Wires are also allowed to have one or two ``dangling'' ends, which are not connected to a dot or box: these represent inputs of the diagram if oriented towards the bottom, outputs of the diagram if oriented to the top.

\subsection{The generators and their interpretation}
\label{s:ZX-translation}

The diagrams of the ZH-calculus are generated by \emph{Z-spiders}, which are represented as white dots, and \emph{H-boxes}, which are represented as white boxes labelled with a complex number $a$.
These generators are interpreted as follows, where $\intf{\cdot}$ denotes the map from diagrams to matrices.
\[
 \intf{\tikzfig{Z-spider}} := \ket{0}^{\otimes n}\bra{0}^{\otimes m} + \ket{1}^{\otimes n}\bra{1}^{\otimes m} \qquad\qquad
 \intf{\tikzfig{H-spider}} := \sum a^{i_1\ldots i_m j_1\ldots j_n} \ket{j_1\ldots j_n}\bra{i_1\ldots i_m}
\]
The sum in the second equation is over all $i_1,\ldots, i_m, j_1,\ldots, j_n\in\{0,1\}$, i.e.\ an H-box represents a matrix all but one of whose entries are equal to 1.
A label of $-1$ is usually left out and the box is then drawn smaller, e.g.\ $\hadaunit:=\hadastate{-1}$.
Straight and curved wires have the following interpretations:
\[
 \intf{\;|\;} := \ketbra{0}{0}+\ketbra{1}{1} \qquad\qquad\qquad
 \intf{\tikzfig{wire-cup}} := \ket{00}+\ket{11} \qquad\qquad\qquad
 \intf{\tikzfig{wire-cap}} := \bra{00}+\bra{11}.
\]

The juxtaposition of two diagrams is interpreted as the tensor product of the corresponding matrices and the sequential composition of two diagrams is interpreted as the matrix product of the corresponding matrices:
\[
 \intf{\gendiagram{$D_1$}\;\gendiagram{$D_2$}} := \intf{\gendiagram{$D_1$}}\otimes\intf{\gendiagram{$D_2$}} \qquad\qquad \intf{\tikzfig{sequential-composition}} := \intf{\gendiagram{$D_2$}}\circ\intf{\gendiagram{$D_1$}}
\]

The statements of the relations of the ZH-calculus will be simplified by introducing two derived generators, called \emph{grey spiders} and NOT, respectively.
\beq\label{eq:grey-spider}
 \tikzfig{X-spider-dfn}
\eeq
\beq\label{eq:X-dfn}
 \tikzfig{negate-dfn}
\eeq
With these definitions, \graymult\ acts on computational basis states as XOR and \greyphase{\neg} acts as NOT:
\[
 \intf{\graymult} = \ketbra{0}{00}+\ketbra{0}{11}+\ketbra{1}{01}+\ketbra{1}{10} \qquad\qquad\qquad \intf{\greyphase{\neg}}=\ketbra{0}{1}+\ketbra{1}{0}.
\]



There is an evident encoding of the generators of the ZX-calculus into ZH given by the following translation:
\[
 \tikzfig{green-spider} \qquad\qquad
 \tikzfig{Hadamard} \qquad\qquad
 \tikzfig{red-spider}
\]
Since it is well-known that the ZX-calculus is universal for representing arbitrary linear maps $\mathbb C^{2^m} \to \mathbb C^{2^n}$, the following is immediate:

\begin{proposition}
Any linear map $\mathbb C^{2^m} \to \mathbb C^{2^n}$ can be expressed using the generators of the ZH-calculus.
\end{proposition}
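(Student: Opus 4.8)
The plan is to combine the universality of the ZX-calculus with the generator-wise translation of ZX-diagrams into ZH-diagrams displayed immediately above the statement. Concretely, given a linear map $f\colon\mathbb C^{2^m}\to\mathbb C^{2^n}$, the (well-known) universality of the ZX-calculus supplies a ZX-diagram $D$ with $\intf{D}=f$, working in the scaled version of the ZX-calculus so that all scalar factors are accounted for. Applying the translation locally to every generator occurring in $D$, while leaving the underlying wiring untouched, produces a ZH-diagram $\tau(D)$; it then remains only to check that $\intf{\tau(D)}=\intf{D}$.

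For this, observe that the ZH interpretation map is defined compositionally: it sends juxtaposition to tensor product, sequential composition to matrix product, and the identity wire, cup and cap to exactly the matrices used in the ZX-calculus. Hence, by a straightforward structural induction on $D$, it suffices to verify that for each ZX generator $g$ the translation satisfies $\intf{\tau(g)}=\intf{g}$. Each such verification is a short computation on generators already introduced: a Z-spider is by definition the ZX green spider, with a nonzero phase $\alpha$ realised by attaching an arity-$1$ H-box labelled $e^{i\alpha}$, which implements $\mathrm{diag}(1,e^{i\alpha})$; a binary H-box labelled $-1$ is the (unnormalised) Hadamard matrix; and the grey spider was defined in~\eqref{eq:grey-spider} precisely so as to behave as the ZX red spider, as witnessed by the XOR and NOT computations recorded above. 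The induction then yields $\intf{\tau(D)}=\intf{D}=f$, as required.

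The only point needing genuine attention — and the closest thing to an obstacle here — is the bookkeeping of scalars, since the Hadamard gate and the spiders carry different normalising factors in the ZX- and ZH-conventions; to reproduce $f$ exactly rather than merely up to a global nonzero constant one inserts the appropriate powers of $\tfrac1{\sqrt2}$. As an arbitrary complex scalar is itself expressible as a ZH-diagram (for instance as an arity-$0$ H-box, or as a cap composed with a cup through a unary H-box), this presents no difficulty, and the argument goes through.
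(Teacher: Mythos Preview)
Your proposal is correct and follows exactly the same route as the paper: the paper simply states that since the ZX-calculus is well known to be universal and there is an evident encoding of ZX generators into ZH, the result is immediate. You have merely spelled out in more detail the structural induction and scalar bookkeeping that the paper leaves implicit.
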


We will also give a normal form in Section~\ref{s:completeness} which directly implies universality of the ZH-calculus, without going via ZX.

\begin{figure}
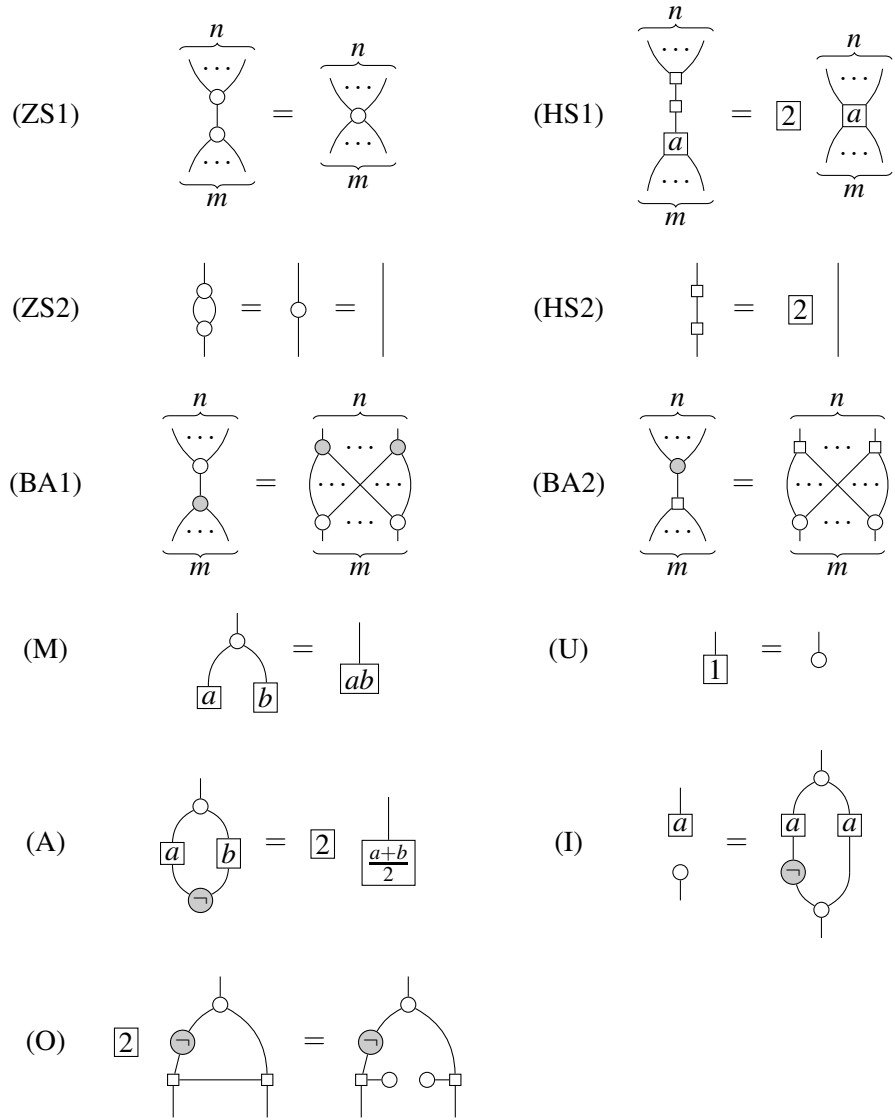

 \centering
 \begin{tabular}{ccccc}
  (ZS1) & \tikzfig{Z-spider-rule} & \qquad & (HS1) & \tikzfig{H-spider-rule} \\ &&&& \\
  (ZS2) & \tikzfig{Z-special} & & (HS2) & \tikzfig{H-identity} \\ &&&& \\
  (BA1) & \tikzfig{ZX-bialgebra} & & (BA2) & \tikzfig{ZH-bialgebra} \\ &&&& \\
  (M) & \tikzfig{multiply-rule} & & (U) & \tikzfig{unit-rule} \\ &&&& \\
  (A) & \tikzfig{average-rule} & & (I) & \tikzfig{intro-rule} \\ &&&& \\
  (O) & \tikzfig{ortho-rule} & & &
 \end{tabular}
 \caption{The rules of the ZH-calculus.
 Throughout, $m,n$ are nonnegative integers and $a,b$ are arbitrary complex numbers.
 The right-hand sides of both \textit{bialgebra} rules (BA1) and (BA2) are complete bipartite graphs on $(m+n)$ vertices, with an additional input or output for each vertex.
 The horizontal edges in equation (O) are well-defined because only the topology matters and we do not need to distinguish between inputs and outputs of generators. n.b. the rules (M), (A), (U), (I), and (O) are pronounced \textit{multiply}, \textit{average}, \textit{unit}, \textit{intro}, and \textit{ortho}, respectively.}
 \label{fig:ZH-rules}
\end{figure}

\subsection{The relations, and soundness}\label{sec:relations}

The rules of the ZH-calculus are given in Figure~\ref{fig:ZH-rules}. We furthermore add one meta rule, often stated as ``only topology matters''. That is, two diagrams are considered equivalent if one can be deformed into the other.
Furthermore, the Z-spiders and H-boxes are assumed to be \emph{symmetric} and \emph{undirected}: i.e.\ two inputs or outputs of the same generator can be swapped without changing the interpretation, and an input can be ``bent'' around to become an output, or conversely. Graphically:
\ctikzfig{generator-symmetries}


\medskip

\begin{proposition}
 The ZH-calculus is sound.
\end{proposition}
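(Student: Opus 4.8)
The plan is to prove soundness by showing that the interpretation $\intf{\cdot}$ sends the two sides of every defining equation to one and the same complex matrix. Recall that $\intf{\cdot}$ is determined on the two kinds of generator and is then extended to all diagrams by sending juxtaposition to $\otimes$ and sequential composition to matrix product; abstractly, it is the unique strict symmetric-monoidal, compact-closed functor out of the free such category on the ZH signature realising the stated interpretations of the generators, the cup, the cap and the identity wire. Since equality of diagrams is generated by the ``only topology matters'' meta-rule, the generator symmetries, and the rules of Figure~\ref{fig:ZH-rules}, it suffices to verify each of these finitely many schemes, checking those indexed by arities $m,n$ (or by complex parameters $a,b$) uniformly.

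First I would handle the meta-rules. Invariance under deformation holds because the target category of complex matrices is compact closed with cup $\ket{00}+\ket{11}$, cap $\bra{00}+\bra{11}$ (satisfying the snake equations) and swap $\ket{ij}\mapsto\ket{ji}$; this is the standard string-diagram soundness argument, needing nothing specific to ZH. The generator symmetries are then immediate from the interpretations: the Z-spider interpretation $\ket{0}^{\otimes n}\bra{0}^{\otimes m}+\ket{1}^{\otimes n}\bra{1}^{\otimes m}$ is a sum of two totally symmetric basis tensors, hence invariant under permuting its $m+n$ legs and under bending a leg across a cup or cap; and the H-box coefficient $a^{i_1\cdots i_m j_1\cdots j_n}$, which equals $a$ exactly when all indices are $1$ and is $1$ otherwise, is symmetric in all $m+n$ indices at once, so the H-box interpretation has the same invariances.

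The heart of the argument is a rule-by-rule computation, which I would organise around one bookkeeping device: for any ZH diagram $D$ built from Z-spiders and H-boxes, the matrix entry $\bra{\vec\jmath}\,\intf{D}\,\ket{\vec\imath}$ is the sum, over all assignments of values in $\{0,1\}$ to the internal wires of $D$ that agree with the boundary values $\vec\imath$ and $\vec\jmath$, of a product of local weights: a Z-spider contributes $1$ if all of its incident wire values coincide and $0$ otherwise, while an H-box labelled $a$ contributes $a$ raised to the product of its incident wire values. With this description in hand, the Z-spider rules (ZS1) and (ZS2), the H-box rules (HS1) and (HS2), the rules (M), (U), (I), the averaging rule (A), and the orthogonality rule (O) each collapse to a short identity between such sums; for the rules featuring grey nodes one uses, from the definitions \eqref{eq:grey-spider} and \eqref{eq:X-dfn}, that $\graymult$ computes XOR and $\greyphase{\neg}$ a bit-flip on computational basis states, so that (A) becomes an elementary averaging identity over the XOR'd wire value. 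For the two bialgebra rules (BA1) and (BA2) I would evaluate both sides on a basis state: (BA1) is the familiar copy--XOR bialgebra, valid already at the level of relations; for (BA2) one grades by the value carried through the Z-spider on the left and checks that the complete bipartite graph of $m+n$ alternating-colour generators on the right reproduces exactly the same product of local weights, uniformly in $m$ and $n$.

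The difficulty here is bookkeeping rather than conceptual: the ZH-calculus is scalar-exact, so in rules like (HS2), (A) and the bialgebra rules one must track stray factors such as $2$ and $\sqrt2$ precisely, and the bialgebra and average rules call for a little combinatorial care --- including over the arity edge cases where one side of the bipartite graph is empty --- to confirm that the right-hand-side weights match those on the left for all $m,n$. But none of this is deep: once the ``sum over internal wire assignments'' description is set up, every remaining verification is an entirely elementary identity between complex matrices, so no genuinely hard step is anticipated.
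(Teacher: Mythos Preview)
Your proposal is correct and takes essentially the same approach as the paper: verify the meta-rule via the compact-closed structure of the target category, and check the generator symmetries and each rule of Figure~\ref{fig:ZH-rules} by direct matrix calculation. The paper's proof is a two-sentence sketch that simply declares these checks ``straightforward''; your version fleshes out the same strategy with an explicit organising device (the sum over internal wire assignments), which is a perfectly standard way to carry out those concrete calculations.
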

\begin{proof}
It is straightforward to check the symmetry properties for each generator and all of the rules in Figure~\ref{fig:ZH-rules} by concrete calculation. Soundness of the meta rule ``only the topology matters'' follows by considering the string diagrams as morphisms in a compact closed category~\cite{SelingerCPM}.
\end{proof}

\subsection{!-box notation}\label{sec:bang-boxes}

Many of the calculations in this paper are greatly simplified by the use of \textit{!-box notation}~\cite{kissinger2014pattern}. A !-box (pronounced ``bang box'') in a string diagram represents a part of the diagram that is able to fan out arbitrarily. That is, the contents of a !-box, along with any wires into or out of the !-box, can be copied $n$ times for any non-negative integer $n$.
For example, the !-box diagram below represents the following family of (concrete) string diagrams, one for each $n$:
\[ \tikzfig{bang-box-example} \quad \longleftrightarrow \quad
 \left\{
 \ \ \tikzfig{bang-box-example0}\ \ ,\quad
 \ \ \tikzfig{bang-box-example1}\ \ ,\quad
 \ \ \tikzfig{bang-box-example2}\ \ ,\quad
 \ \ \tikzfig{bang-box-example3}\ \ ,\quad
 \ \ \ldots\ \  \right\}
\]
All of the resulting string diagrams are well-defined because all of our generators can have arbitrary arities. We can also use !-boxes in diagram equations, as long as each !-box on the LHS has a corresponding !-box on the RHS, and the inputs/outputs in each !-box match. Such a rule represents a family of equations where each \textit{pair} of corresponding !-boxes is replicated $n$ times:
\[
\tikzfig{unit-bangboxed} \quad \longleftrightarrow \quad
\left\{
\ \ \tikzfig{unit-bb0}\ \ ,\quad
\ \ \tikzfig{unit-bb1}\ \ ,\quad
\ \ \tikzfig{unit-bb2}\ \ ,\quad
\ \ \ldots\ \ 
\right\}
\]
Note the dashed box on the right-hand side of the first equation denotes an empty diagram.


With this notation, the definition of grey spiders \eqref{eq:grey-spider} becomes
\beq\label{eq:grey-spider-dfn}
 \tikzfig{X-spider-dfn-bb}
\eeq
Additionally, the rules (ZS1), (HS1), (BA1), and (BA2) from Figure~\ref{fig:ZH-rules} become
\[
 \text{(ZS1)}\quad \tikzfig{Z-spider-rule-bb} \qquad
 \text{(HS1)}\quad \tikzfig{H-spider-rule-bb} \qquad
 \text{(BA1)}\quad \tikzfig{ZX-bialgebra-bb} \qquad
 \text{(BA2)}\quad \tikzfig{ZH-bialgebra-bb}
\]
Using the rules in this form makes it straightforward to prove !-box generalisations of the rules (M), (U), (A), and (I).

\begin{lemma}\label{lem:bb-rules}
The ZH-calculus satisfies the following rules:
 \[
  \text{(M!)}\;\; \tikzfig{multiply-rule-bb} \qquad
  \text{(U!)}\;\; \tikzfig{unit-bangboxed} \qquad
  \text{(A!)}\;\; \tikzfig{avg-lemma} \qquad
  \text{(I!)}\;\; \tikzfig{intro-rule-bangboxed}
 \]
\end{lemma}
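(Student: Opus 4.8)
The plan is to prove all four identities by induction on the number $n$ of copies into which the displayed !-box is expanded, using the !-boxed spider-fusion rules to peel off one copy at a time and then apply the corresponding non-!-boxed rule together with the induction hypothesis.

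First I would treat (U!) and (M!), which are the easiest. Expanding the relevant !-box $n+1$ times, I would use (ZS1) in its !-boxed form to split the Z-spider into one spider carrying the first $n$ copies of the boxed wire(s) and a second spider carrying the last copy; around that last copy the diagram is exactly an instance of (U) -- respectively of (M) applied to the isolated $\hadastate{a}$--$\hadastate{b}$ pair -- which I apply, and the remainder is the $n$-fold expansion, to which the induction hypothesis applies. Re-fusing with (ZS1) finishes the step, and the base case $n=0$ is immediate since the !-box is then empty.

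For (A!) and (I!) the same strategy works but with more bookkeeping, since here the !-box contents also touch an H-box. In the inductive step I would use (HS1) to unfuse the H-box into an $n$-legged and a $1$-legged H-box (joined through a Z-spider as in the H-spider rule), isolate the single extra copy, apply the non-!-boxed rule (A) -- respectively (I) -- to it, and then re-fuse with (HS1) and (ZS1); where a Z-spider must be moved past an H-box before the unfusing I invoke (BA1) or (BA2) in !-boxed form. The induction hypothesis again handles the remaining $n$ copies, and the base cases $n=0$ are again immediate.

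The main obstacle will be the inductive step for (A!) and, to a lesser extent, (I!): one has to track carefully which wires are duplicated when the !-box is expanded by one more copy, and check that after rewriting the isolated copy the diagram still presents the right interface to be re-fused into the $n$-fold-expanded form -- i.e.\ that the local rewrite does not disturb the wires joining that copy to the rest of the diagram. Once the interfaces are seen to agree, each inductive step is a short chain of applications of (ZS1), (HS1), the bialgebra rules, and the relevant base rule.
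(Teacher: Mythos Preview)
Your inductive strategy can in principle be made to work, but as written it has two gaps. First, the $n=0$ base case is \emph{not} immediate for (M!) and (U!): with the !-box empty, (M!) asserts that juxtaposing the $0$-ary H-boxes labelled $a$ and $b$ yields the $0$-ary H-box labelled $ab$, and (U!) asserts that the $0$-ary H-box labelled $1$ equals the empty diagram. Neither of these is an axiom of the calculus---the paper in fact singles them out immediately after the lemma as nontrivial \emph{consequences} of (M!) and (U!). So your induction has no base to stand on, and if you instead start at $n=1$ you still owe a separate argument for $n=0$. Second, your inductive step for (M!) and (U!) presupposes a single Z-spider that (ZS1) can split to isolate one copy, but neither rule has that shape: the left-hand side of (U!) is a bare H-box $H_n(1)$ with no Z-spider at all, and in (M!) the !-box contains a separate $\whitemult$ per copy while the two H-boxes are shared across all copies. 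Peeling off one copy therefore requires unfusing the H-boxes themselves via (HS1) together with a bialgebra step---precisely the machinery you reserved only for (A!) and (I!).

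The paper sidesteps both problems by not inducting. Each of (M!), (U!), (A!), (I!) is derived by a single chain of rewrites carried out directly at the !-box level, using the !-boxed forms of (ZS1), (HS1), (BA1), (BA2) together with one application of the corresponding base rule (M), (U), (A), or (I). Because the derivation is itself a !-box equation, it instantiates to every $n\ge 0$ simultaneously, so no separate treatment of $n=0$ is needed and no inductive bookkeeping of split-and-refused H-boxes has to be tracked.
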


\noindent This lemma is proved in Appendix~\ref{sec:bang-rules}. At this point, it is worth highlighting the special cases of (M!) and (U!) where the !-box is expanded $0$ times:
\[
  \tikzfig{scalar-mult} \qquad\qquad\qquad\qquad\qquad\qquad
  \tikzfig{scalar-rule}
\]
These rules enable us to multiply scalars at will, and in particular eliminate scalars by multiplying by the inverse. From hence forth, we will use this fact without further comment during our proofs.

In this paper, we use a mild, but very useful, extension of the usual !-box notation, which allows !-boxes to be indexed by a the elements of a finite set. For example, indexing over the finite set $\mathbb B^2 := \{ 00, 01, 10, 11 \}$, we can write expressions such as:
\[
\tikzfig{indexed-example}
\ \ :=\ \ \ 
\tikzfig{index-example-rhs}
\]
This extends to equations in the obvious way:
\[
\left(\ \tikzfig{index-example-rule}\ \right)
\ \ := \ \ 
\left( \ \tikzfig{index-example-rule-inst}\ \right)
\]
where we require corresponding !-boxes on the LHS and RHS to be indexed by the \textit{same} finite set. Note that inputs and outputs of a copy associated with the index $x \in X$ on the LHS are matched with inputs and outputs of the \textit{same} copy on the RHS.

We recover the behaviour of normal, un-labelled !-boxes by interpreting a !-box without a label as being indexed by an \textit{arbitrary} finite set, e.g.
\[
\tikzfig{Z-spider-rule-bb} \qquad
\longleftrightarrow \qquad
\tikzfig{Z-spider-rule-bb-index} \quad \textrm{(for any finite sets $X$ and $Y$)} \]

\section{Completeness}
\label{s:completeness}

We show that the ZH-calculus is complete by demonstrating the existence of a unique normal form for ZH-diagrams. It is first worth noting that, because we can turn inputs into outputs arbitrarily (cf. the beginning of section~\ref{sec:relations}), it suffices to consider diagrams which have only outputs. We call these \textit{states}. Concretely, these are interpreted as column vectors (i.e. kets).



For states $\psi,\phi$, let $\psi * \phi$ be the \textit{Schur product} of $\psi$ and $\phi$ obtained by plugging the $i$-th output of $\psi$ and $\phi$ into \whitemult, for each $i$:
\ctikzfig{schur}
It follows from (ZS1) that $*$ is associative and commutative, so we can write $k$-fold Schur products $\psi_1 * \psi_2 * \ldots * \psi_k$ without ambiguity.
For any finite set $J$ with $|J| = k$, let $\prod_{j\in J} \psi_j$ be the $k$-fold Schur product.

Let $\mathbb B^n$ be the set of all $n$-bitstrings. For any $\vec{b} := b_1\ldots b_n \in \mathbb B^n$, define the \textit{indexing map} $\iota_{\vec{b}}$ as follows:
\begin{equation}\label{eq:iota-dfn}
\iota_{\vec{b}} \; = \;
\tikzfig{indexing-box} \; = \; \left(\greyphase{\neg}\right)^{1 - b_1} \ldots \left(\greyphase{\neg}\right)^{1 - b_n}.
\end{equation}
Then normal forms are given by the following $2^n$-fold Schur products:
\begin{equation}\label{eq:nf-formula}
  \prod_{\vec{b} \in \mathbb B^n} \big( \iota_{\vec{b}} \circ H_n(a_{\vec{b}}) \big)
\end{equation}
where $H_n(a_{\vec{b}})$ is the arity-$n$ H-box (considered as a state) labelled by an arbitrary complex number $a_{\vec{b}}$.

A normal form diagram can be seen as a collection of $n$ spiders, fanning out to $2^n$ H-boxes, each with a distinct configuration of NOT's corresponding to the $2^n$ bitstrings in $\mathbb B^n$. Diagrammatically, normal forms are:
\[
\tikzfig{nf-bbox}\ \ :=\ \ 
\tikzfig{nf-picture}
\]

\begin{theorem}\label{thm:nf-unique}
Normal forms are unique. In particular:
\begin{equation}\label{eq:nf-concrete}
\intf{ \, \prod_{\vec{b} \in \mathbb B^n} \big( \iota_{\vec{b}} \circ H_n(a_{\vec{b}}) \big) } =
\sum_{\vec{b} \in \mathbb B^n} a_{\vec{b}} \ket{\vec{b}}.
\end{equation}
\end{theorem}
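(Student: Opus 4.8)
The theorem has two parts: (1) the concrete computation that the normal form diagram (eq. \ref{eq:nf-formula}) has interpretation $\sum_{\vec b} a_{\vec b}\ket{\vec b}$, and (2) the statement that normal forms are *unique*, which — given that the coefficients $a_{\vec b}$ can be arbitrary complex numbers — follows immediately from (1), since the map $(a_{\vec b})_{\vec b\in\mathbb B^n}\mapsto\sum_{\vec b}a_{\vec b}\ket{\vec b}$ is manifestly a bijection $\mathbb C^{2^n}\to\mathbb C^{2^n}$. So the real content is the concrete interpretation (\ref{eq:nf-concrete}), and I would attack that directly by computing $\intf{\,\cdot\,}$ on each factor and then on the Schur product.

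Let me plan the computation. First I would compute the interpretation of a single factor $\iota_{\vec b}\circ H_n(a_{\vec b})$. The H-box state $H_n(a)$ has interpretation $\sum_{\vec j\in\mathbb B^n} a^{j_1\cdots j_n}\ket{\vec j} = \sum_{\vec j} a^{[\vec j = 1\cdots 1]}\ket{\vec j}$; that is, it is the all-ones vector except the $\ket{1\cdots1}$ entry is $a$. Next, $\iota_{\vec b} = (\greyphase{\neg})^{1-b_1}\otimes\cdots\otimes(\greyphase{\neg})^{1-b_n}$ applies a NOT on wire $k$ exactly when $b_k = 0$; since $\intf{\greyphase{\neg}} = \ketbra{0}{1}+\ketbra{1}{0}$, the effect on basis vectors is $\ket{\vec j}\mapsto\ket{\vec j \oplus \overline{\vec b}}$ where $\overline{\vec b}$ is the bitwise complement of $\vec b$. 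Hence $\intf{\iota_{\vec b}\circ H_n(a_{\vec b})} = \sum_{\vec j} a_{\vec b}^{[\vec j = 1\cdots1]}\ket{\vec j\oplus\overline{\vec b}}$, and reindexing $\vec k = \vec j \oplus \overline{\vec b}$ (so $\vec j = 1\cdots1$ iff $\vec k = \vec b$) gives that this factor is the all-ones vector with the single entry at position $\vec b$ replaced by $a_{\vec b}$, i.e. $\intf{\iota_{\vec b}\circ H_n(a_{\vec b})} = \sum_{\vec k}\big(a_{\vec b}\big)^{[\vec k = \vec b]}\ket{\vec k}$.

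Then I would use the fact — noted just before the theorem, following from (ZS1) — that the Schur product $*$ is interpreted entrywise: $\intf{\psi*\phi}$ has $\vec k$-th entry the product of the $\vec k$-th entries of $\intf\psi$ and $\intf\phi$. (This should be recorded as a small lemma or inline remark: plugging the $i$-th outputs of two states into \whitemult\ and using the defining equation of the Z-spider gives componentwise multiplication of the coordinate vectors.) Taking the $2^n$-fold Schur product over all $\vec b\in\mathbb B^n$, the $\vec k$-th entry of $\intf{\prod_{\vec b}(\iota_{\vec b}\circ H_n(a_{\vec b}))}$ is $\prod_{\vec b\in\mathbb B^n}\big(a_{\vec b}\big)^{[\vec k=\vec b]}$. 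For each fixed $\vec k$ exactly one factor in this product has exponent $1$ (namely $\vec b = \vec k$) and all the others have exponent $0$ and hence equal $1$, so the $\vec k$-th entry is just $a_{\vec k}$. This yields $\intf{\prod_{\vec b}(\iota_{\vec b}\circ H_n(a_{\vec b}))} = \sum_{\vec k\in\mathbb B^n} a_{\vec k}\ket{\vec k}$, which is exactly (\ref{eq:nf-concrete}), and uniqueness follows as explained above.

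The only genuinely delicate point — and the one I would be careful to state explicitly rather than wave at — is the bookkeeping of the NOT-relabelling: one must check that $\iota_{\vec b}$ really permutes the computational basis by $\vec j\mapsto\vec j\oplus\overline{\vec b}$ and hence moves the ``special'' H-box entry from position $1\cdots1$ to position $\vec b$, so that the $2^n$ factors collectively have their special entries at $2^n$ \emph{distinct} positions exhausting $\mathbb B^n$. Everything else is routine linear algebra once the entrywise-multiplication property of $*$ is in hand; there is no need to invoke any further rules of the calculus for this theorem, since it is purely a statement about the standard interpretation.
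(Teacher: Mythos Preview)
Your proposal is correct and follows essentially the same argument as the paper: compute that $\iota_{\vec b}$ permutes the computational basis by $\ket{\vec j}\mapsto\ket{\vec j\oplus\overline{\vec b}}$, so each factor $\iota_{\vec b}\circ H_n(a_{\vec b})$ is the all-ones vector with $a_{\vec b}$ in position $\vec b$, and then observe that the entrywise (Schur) product of these gives $\sum_{\vec b}a_{\vec b}\ket{\vec b}$. Your write-up is in fact more detailed than the paper's, which dispatches the whole thing in three sentences.
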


\begin{proof}
  The map $\iota_{\vec b}$ is a permutation that acts on computational basis elements as $\ket{\vec c} \mapsto \ket{\vec c \oplus \vec b \oplus \vec 1}$. In particular, it sends the basis element $\ket{\vec 1}$ to $\vec b$. Hence $\iota_{\vec b} \circ H_n(a_{\vec b})$ is a vector with $a_{\vec b}$ in the $\vec b$-th component and $1$ everywhere else. The Schur product of all such vectors indeed gives the RHS of ~\eqref{eq:nf-concrete}.
\end{proof}

Since equation~\eqref{eq:nf-concrete} gives us a means of constructing any vector in $\mathbb C^{2^n}$, Theorem~\ref{thm:nf-unique} can also be seen as a proof of universality of the ZH calculus, independent of the encoding into ZX we gave in section~\ref{s:ZX-translation}.




 


We now prove 2 lemmas which will assist in manipulating normal forms:

\begin{lemma}\label{lem:X-copy}
 The NOT operator copies through white spiders:
 \ctikzfig{X-copy}
\end{lemma}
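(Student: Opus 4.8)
The plan is to prove the !-boxed identity by a !-box induction on the number of legs of the white spider, reducing the whole statement to a single elementary instance that is then discharged by unfolding the definitions of the NOT and grey generators. The inductive step is just spider fusion: by the !-box form of (ZS1), an arbitrary white spider can be split into a white $1\to 2$ spider (the comultiplication) feeding a smaller white spider, so a NOT on one leg of a big white spider can be pushed node-by-node across such a decomposition, and the NOTs so produced can be merged back onto the remaining legs using (ZS1) again. Thus it suffices to treat the elementary case: a white $1\to 2$ spider carrying a NOT on its input equals the same spider carrying a NOT on each of its two outputs. Because ``only topology matters'' (cf.\ the opening of Section~\ref{sec:relations}), there is no need to distinguish inputs from outputs here, so this case really does cover an arbitrary leg of an arbitrary white spider; it is also harmless to record the trivial identity $\greyphase{\neg}\circ\greyphase{\neg}=\idwire$, immediate from \eqref{eq:X-dfn}, (HS2) and (ZS1), which lets us freely cancel pairs of NOTs on a single wire.

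For the elementary case I would expand $\greyphase{\neg}$ using its definition \eqref{eq:X-dfn} and expand the grey generators using \eqref{eq:grey-spider}, so that the picture becomes a diagram built only from white spiders and arity-$2$ H-boxes. After rearranging the H-boxes with (HS1) and (HS2), the key move is an application of the white/grey bialgebra rule (BA1) in its !-box form, which lets the white comultiplication commute past the grey structure underlying the NOT and duplicate it onto the two output wires; the two duplicates are then reassembled into NOT gates using (ZS1), (HS1) and (HS2), and any spurious scalar factors generated by the H-boxes are cleared using the scalar-multiplication consequences of (M!) and (U!) noted just before Lemma~\ref{lem:bb-rules}.

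The main obstacle I expect is exactly this elementary case, and in particular setting up the H-boxes via (HS1)/(HS2) so that a genuine instance of (BA1) appears, while keeping careful track of the scalars the Hadamard boxes introduce along the way. Everything else --- the !-box induction and the involutivity of NOT --- is then routine.
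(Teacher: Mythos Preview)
Your approach would work, but it is more elaborate than the paper's. The paper gives a single short chain of !-boxed rewrites: it unfolds the NOT via~\eqref{eq:X-dfn} and then pushes the resulting pieces through the white spider using the !-box forms of the bialgebra and spider rules, before refolding into a NOT on every leg. Because (ZS1), (BA1), (BA2) and the definition~\eqref{eq:grey-spider-dfn} are already stated with !-boxes, the fan-out to arbitrarily many legs is handled in one shot; no induction on the number of legs is needed. Your reduction to the $1\to 2$ case via spider fusion is valid, but it simply replays by hand what the !-boxed rules already deliver for free.

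There is also a small inconsistency in your plan for the elementary case. You propose to expand the NOT via~\eqref{eq:X-dfn} \emph{and} then expand the resulting grey generators via~\eqref{eq:grey-spider}, and only afterwards apply (BA1). But (BA1) is the white/grey bialgebra law; once every grey spider has been dissolved into white spiders and H-boxes, there is no grey node left for (BA1) to act on. Either keep the grey structure coming from~\eqref{eq:X-dfn} intact and apply (BA1) directly---this is cleaner and closer to what the paper does---or unfold everything to white spiders and H-boxes and use (BA2) instead. Both routes work; the hybrid you describe does not parse as written. The involutivity of NOT that you record along the way is also not needed anywhere in the argument.
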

\begin{proof}
 Starting from the left-hand side,
 \[
  \tikzfig{X-copy-proof} \qedhere
 \]
\end{proof}

\begin{lemma}\label{lem:iota-copy}
 The $\iota_{\vec{b}}$ operator copies through white spiders, i.e.\ for any $\vec{b}\in\mathbb B^n$:
 \ctikzfig{iota-copy}
\end{lemma}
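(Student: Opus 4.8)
The plan is to reduce the statement about $\iota_{\vec b}$ to the already-established Lemma~\ref{lem:X-copy} about a single NOT, using the fact that $\iota_{\vec b}$ is by its very definition~\eqref{eq:iota-dfn} a tensor product of NOTs (one on each of the $n$ wires, occurring precisely on those wires $k$ with $b_k = 0$) together with the identity on the remaining wires. First I would observe that since $\iota_{\vec b} = \bigl(\greyphase{\neg}\bigr)^{1-b_1} \otimes \cdots \otimes \bigl(\greyphase{\neg}\bigr)^{1-b_n}$, copying $\iota_{\vec b}$ through the white spider amounts to copying each of these factors independently through the spider. The identity wires copy trivially (they simply pass through, using only the ``only topology matters'' meta rule), and each NOT factor copies by Lemma~\ref{lem:X-copy}. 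So the proof is essentially an $n$-fold application of Lemma~\ref{lem:X-copy}, one per input leg of the spider that carries a NOT.

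The clean way to present this, consistent with the !-box machinery already set up in Section~\ref{sec:bang-boxes}, is an induction on $n$ (the arity), or equivalently a !-box argument: write the generic $\iota_{\vec b}$-through-spider equation with a !-box ranging over the $n$ input wires, where the content of the !-box is a single leg carrying either a NOT or a bare wire depending on the corresponding bit of $\vec b$. Since Lemma~\ref{lem:X-copy} has already been proven in its own right, and since the spider fusion rule (ZS1) in its !-boxed form lets us add or remove legs of the white spider freely, the !-box generalisation of Lemma~\ref{lem:X-copy} follows by the same pattern used for (M!), (U!), (A!), (I!) in Lemma~\ref{lem:bb-rules}. Then the desired statement is just the instantiation of that !-boxed rule where the $k$-th copy of the !-box is chosen to contain a NOT when $b_k = 0$ and a plain wire when $b_k = 1$.

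Concretely the steps are: (1) recall $\iota_{\vec b}$ as a tensor product of NOTs and identities from~\eqref{eq:iota-dfn}; (2) state the !-box version of Lemma~\ref{lem:X-copy} — a single NOT on one leg of an $(n{+}1)$-legged white spider pulls through to become a NOT on each of the $n$ outputs — and note it follows from Lemma~\ref{lem:X-copy} together with (ZS1); (3) apply this once for each input leg of the spider in $\iota_{\vec b}$ carrying a NOT, peeling them off one at a time, the bare-wire legs requiring nothing; (4) collect the resulting NOTs on the output side, which by~\eqref{eq:iota-dfn} reassemble into one copy of $\iota_{\vec b}$ on each output wire of the white spider. I would write the displayed proof as a short chain of diagram rewrites: introduce the spider on each NOT-carrying input via Lemma~\ref{lem:X-copy}, then merge the newly created spiders back into the original one using spider fusion.

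I do not expect a serious obstacle here: the lemma is genuinely a corollary of Lemma~\ref{lem:X-copy}, and the only thing to get right is the bookkeeping of which wires carry NOTs. The mildest subtlety is making the !-box indexing precise — the !-box in the statement ranges over the $n$ wires of the spider, but the \emph{content} of each copy (NOT versus bare wire) depends on $\vec b$ — so I would phrase it using the indexed-!-box extension introduced at the end of Section~\ref{sec:bang-boxes}, or simply prove it by induction on $n$ with the base case being either the empty spider or a single NOT (Lemma~\ref{lem:X-copy} itself), and the inductive step peeling off one leg. Either presentation is routine.
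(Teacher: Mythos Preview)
Your approach is correct and is exactly what the paper does: the paper's proof is the single sentence ``This follows immediately from Lemma~\ref{lem:X-copy} via the definition of $\iota_{\vec{b}}$~\eqref{eq:iota-dfn}.'' Your elaboration of the bookkeeping (induction on $n$, indexed !-boxes, peeling off one leg at a time) is more detail than the paper provides, but it is a faithful unpacking of that one-liner rather than a different argument.
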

\begin{proof}
 This follows immediately from Lemma~\ref{lem:X-copy} via the definition of $\iota_{\vec{b}}$ \eqref{eq:iota-dfn}.
\end{proof}

\begin{lemma}\label{lem:convolution-iota}
 The ZH-calculus enables the computation of the Schur product of two maps of the form $\iota_{\vec{b}}\circ H_n(x)$ and $\iota_{\vec{b}}\circ H_n(y)$ for any $\vec{b}\in\mathbb B^n$ and $x,y\in\mathbb C$:
 \ctikzfig{convolution-iota}
\end{lemma}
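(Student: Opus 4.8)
The plan is to unfold the definition of the Schur product, slide the two copies of $\iota_{\vec b}$ through the connecting white spiders, and then close up with the multiply rule. Spelling out $*$, the diagram $(\iota_{\vec b}\circ H_n(x)) * (\iota_{\vec b}\circ H_n(y))$ consists of the two H-box states $H_n(x)$ and $H_n(y)$, each followed by the permutation $\iota_{\vec b}$, whose $i$-th output wires are then joined pairwise by white spiders. By \eqref{eq:iota-dfn}, $\iota_{\vec b}$ is wirewise just $\left(\greyphase{\neg}\right)^{1-b_i}$ on the $i$-th wire, so every one of these white spiders carries one copy of $\left(\greyphase{\neg}\right)^{1-b_i}$ on each of its two inputs.

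First I would push the $\iota_{\vec b}$'s past the white spiders: applying Lemma~\ref{lem:iota-copy} to each connecting spider in the direction that merges the two copies of $\iota_{\vec b}$ sitting on its inputs into a single copy on its output (equivalently, applying Lemma~\ref{lem:X-copy} wire by wire and then cancelling the resulting pair of $\greyphase{\neg}$'s) leaves a single $\iota_{\vec b}$ on the $n$ output wires, sitting on top of the diagram in which $H_n(x)$ and $H_n(y)$ are joined directly through the white spiders. By definition of $*$ that diagram is $H_n(x) * H_n(y)$, so what we have is $\iota_{\vec b}\circ\big(H_n(x) * H_n(y)\big)$. It then suffices to show $H_n(x) * H_n(y) = H_n(xy)$, and this is precisely the multiply rule (M) in its $n$-fold !-boxed form (M!) from Lemma~\ref{lem:bb-rules}; applying it yields $\iota_{\vec b}\circ H_n(xy)$, as required.

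I do not expect a genuine obstacle here: the argument is three short rewrites. The one point needing care is the direction in which the copy lemma is used in the first step — moving $\iota_{\vec b}$ from the two inputs of each connecting spider to its single output — which is legitimate because Z-spiders are symmetric and undirected and only topology matters, and because $\iota_{\vec b}$ is an involution ($\iota_{\vec b}$ acts on the computational basis as $\ket{\vec c}\mapsto\ket{\vec c\oplus\vec b\oplus\vec 1}$, and $\greyphase{\neg}$ is self-inverse), so the two copies that collide on one wire collapse to the identity.
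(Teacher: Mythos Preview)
Your proposal is correct and follows the same route as the paper: the paper's proof is simply ``Apply Lemma~\ref{lem:iota-copy}, followed by (M!).'' Your final paragraph worries more than necessary about the direction of the copy rule---the instance of Lemma~\ref{lem:iota-copy} (equivalently Lemma~\ref{lem:X-copy} wirewise) for a 2-input/1-output white spider already reads directly as ``$\iota_{\vec b}$ on both inputs $=$ $\iota_{\vec b}$ on the output,'' so no separate cancellation-by-involution argument is required.
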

\begin{proof}
Apply Lemma~\ref{lem:iota-copy}, followed by (M!).
\end{proof}


We will now show that normal form diagrams, when combined in various ways, can also be put into normal form. Let \tikzfig{nf} denote an arbitrary normal-form diagram. It is straightforward to see that permuting the outputs of a normal-form diagram merely interchanges the bits in the coefficients $a_{\vec b}$. Hence, normal forms are preserved under permutations of outputs. Furthermore:

\begin{proposition}\label{prop:extension}
 A diagram consisting of a normal form diagram juxtaposed with \whiteunit\ can be brought into normal form using the rules of the ZH-calculus:
 \ctikzfig{extension}
\end{proposition}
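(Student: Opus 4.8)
First note that by Theorem~\ref{thm:nf-unique} the normal form on $n$ outputs has interpretation $\sum_{\vec b\in\mathbb B^n} a_{\vec b}\ket{\vec b}$, so juxtaposing $\whiteunit$, which denotes $\ket 0+\ket 1$, produces $\sum_{\vec b\in\mathbb B^n}\sum_{c\in\mathbb B} a_{\vec b}\ket{\vec b c}$; by soundness the right-hand side is therefore forced to be the normal form on $n+1$ outputs whose coefficients $a_{\vec b c} := a_{\vec b}$ do not depend on the last bit $c$. All the work is to exhibit this equality using the rules, so completeness is not being assumed here.

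The plan is to decompose both sides into Schur products of single-coefficient pieces. Using that $*$ is associative and commutative (a consequence of (ZS1)) and that $\whiteunit * \whiteunit = \whiteunit$ (two Z-spider units feeding a $\whitemult$ fuse into one by (ZS1)), I would rewrite the left-hand side as $\prod_{\vec b\in\mathbb B^n}\bigl((\iota_{\vec b}\circ H_n(a_{\vec b}))\otimes\whiteunit\bigr)$; grouping the $2^{n+1}$ factors of the target normal form by their first $n$ bits rewrites the right-hand side as $\prod_{\vec b\in\mathbb B^n}\bigl((\iota_{(\vec b,0)}\circ H_{n+1}(a_{\vec b})) * (\iota_{(\vec b,1)}\circ H_{n+1}(a_{\vec b}))\bigr)$. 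Since the Schur product is compatible with tensoring along disjoint wires, it then suffices to prove, for each $\vec b\in\mathbb B^n$ and each $a\in\mathbb C$, the single-block identity $(\iota_{\vec b}\circ H_n(a))\otimes\whiteunit = (\iota_{(\vec b,0)}\circ H_{n+1}(a)) * (\iota_{(\vec b,1)}\circ H_{n+1}(a))$.

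For the block identity I would start from its right-hand side. Since $\iota_{(\vec b,0)} = \iota_{\vec b}\otimes\greyphase{\neg}$ and $\iota_{(\vec b,1)} = \iota_{\vec b}\otimes\mathrm{id}$, the two copies of $\iota_{\vec b}$ sitting on the first $n$ wires copy through the Schur-product spiders by Lemma~\ref{lem:iota-copy} (the $\whitemult$-spiders are white spiders), so a single $\iota_{\vec b}$ can be pulled out in front, reducing the claim to the ``grow a leg'' identity $H_{n+1}(a) * \bigl((\mathrm{id}^{\otimes n}\otimes\greyphase{\neg})\circ H_{n+1}(a)\bigr) = H_n(a)\otimes\whiteunit$. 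I would prove this last identity in !-box notation: along the $n$ shared wires the two $H$-boxes merge via the H-box fusion and multiplication rules (HS1), (M!), while along the extra wire the two opposite NOT-configurations, once combined through the Schur-product Z-spider, must degenerate to a bare Z-spider unit --- presumably by (U) together with the remaining rules ((A), (O)), or by a short direct !-box calculation. Substituting back and using $\whiteunit * \cdots * \whiteunit = \whiteunit$ turns the left-hand factorisation into the normal form on $n$ outputs tensored with $\whiteunit$, and the right-hand factorisation into the normal form on $n+1$ outputs, as required.

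The step I expect to be the main obstacle is the ``grow a leg'' identity. It is transparent at the matrix level, but the delicate point is getting the diagrammatic argument and the !-box bookkeeping exactly right --- so that the two $H$-boxes genuinely fuse along the old legs while the new leg collapses to a unit with no stray scalars --- and this is also where the less familiar rules of Figure~\ref{fig:ZH-rules} are likely to be needed. Everything else (associativity and distributivity of the Schur product, the fusion $\whiteunit * \whiteunit = \whiteunit$, and the copying of $\iota_{\vec b}$ via Lemma~\ref{lem:iota-copy}) is routine once the block identity is established.
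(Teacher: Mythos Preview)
Your overall strategy is sound and lines up with the paper's argument, but you have missed the crucial observation that makes the proof short: your ``grow a leg'' identity
\[
H_{n+1}(a) * \bigl((\mathrm{id}^{\otimes n}\otimes\greyphase{\neg})\circ H_{n+1}(a)\bigr) \;=\; H_n(a)\otimes\whiteunit
\]
is, read right-to-left, precisely the \emph{intro} rule (I) from Figure~\ref{fig:ZH-rules}, and its !-boxed form (I!) from Lemma~\ref{lem:bb-rules}. You speculate that (U), (A), or (O) might be needed here, but in fact (I) is exactly the axiom designed for this step; no further derivation is required.

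Once you see this, the block-by-block decomposition and the separate argument that $\whiteunit * \cdots * \whiteunit = \whiteunit$ become unnecessary scaffolding. The paper's proof simply writes the left-hand side in indexed !-box notation and applies (I!) once inside the !-box: each factor $\iota_{\vec b}\circ H_n(a_{\vec b})$ acquires a new leg and splits into the pair $\iota_{(\vec b,0)}\circ H_{n+1}(a_{\vec b})$ and $\iota_{(\vec b,1)}\circ H_{n+1}(a_{\vec b})$, yielding the $(n{+}1)$-output normal form directly. So your proposal is not wrong, but it reinvents (I) as a lemma and then leaves that lemma unproved, whereas the paper takes (I) as a primitive rule and is done in one line.
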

\begin{proof}
 Starting from the left-hand side, which we expand using the indexed !-box notation,
 \ctikzfig{extension-proof}
 The last diagram is a normal form diagram with $n+1$ outputs, i.e.\ the desired result.
\end{proof}

\begin{proposition}\label{prop:convolution}
 The Schur product of two normal form diagrams can be brought into normal form using the rules of the ZH-calculus.
 \ctikzfig{convolution-nf}
\end{proposition}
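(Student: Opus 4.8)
The plan is to exploit the purely multiplicative structure of the normal form. By Theorem~\ref{thm:nf-unique}, the diagram $\prod_{\vec b\in\mathbb B^n}(\iota_{\vec b}\circ H_n(a_{\vec b}))$ denotes $\sum_{\vec b} a_{\vec b}\ket{\vec b}$, so on the nose the Schur product of two normal forms with coefficient families $(a_{\vec b})$ and $(a'_{\vec b})$ is the normal form with coefficients $(a_{\vec b}a'_{\vec b})$. All the work lies in reaching that diagram by rewriting with the rules.

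First I would write out both normal-form diagrams in the indexed-!-box notation of Section~\ref{sec:bang-boxes} and form their Schur product, i.e.\ plug the $i$-th output of each into a shared \whitemult\ node for $i = 1,\dots,n$. Each such node then lies directly above one Z-spider of the first diagram and one of the second, so (ZS1) fuses every such triple of white spiders into a single Z-spider. The result is a single family of $n$ Z-spiders fanning out to $2^{n+1}$ many $\iota_{\vec b}$-decorated H-boxes: the $2^n$ boxes $\iota_{\vec b}\circ H_n(a_{\vec b})$ inherited from the first normal form and the $2^n$ boxes $\iota_{\vec b}\circ H_n(a'_{\vec b})$ from the second. Since $*$ is associative and commutative (a consequence of (ZS1)), this diagram is the $2^{n+1}$-fold Schur product $\prod_{\vec b\in\mathbb B^n}\big( (\iota_{\vec b}\circ H_n(a_{\vec b})) * (\iota_{\vec b}\circ H_n(a'_{\vec b})) \big)$, the two copies carrying each bitstring $\vec b$ having been brought together.

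Next I would apply Lemma~\ref{lem:convolution-iota} to all $2^n$ pairs simultaneously, in its indexed-!-box form, replacing $(\iota_{\vec b}\circ H_n(a_{\vec b})) * (\iota_{\vec b}\circ H_n(a'_{\vec b}))$ by $\iota_{\vec b}\circ H_n(a_{\vec b}a'_{\vec b})$ for every $\vec b$. What remains is precisely $\prod_{\vec b\in\mathbb B^n}(\iota_{\vec b}\circ H_n(a_{\vec b}a'_{\vec b}))$, a normal-form diagram with $n$ outputs, which is the desired conclusion; it also matches the expected interpretation $\sum_{\vec b} a_{\vec b}a'_{\vec b}\ket{\vec b}$.

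The main obstacle is bookkeeping with the !-boxes rather than any genuine mathematical difficulty. One must ensure that when the $n$ Schur-product spiders are fused and the two fan-outs merge, the $\iota_{\vec b}$-decorated H-box coming from the first normal form is matched with the $\iota_{\vec b}$-decorated H-box of the second for the \emph{same} $\vec b$, so that Lemma~\ref{lem:convolution-iota} can be applied index-wise. Making this matching precise is exactly what the index-labelled !-box notation was introduced for, and it is the one place where the argument should be carried out via an explicit diagram rather than in prose.
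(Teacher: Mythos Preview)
Your proposal is correct and follows exactly the paper's approach: the paper's proof is the one-liner ``This follows from (ZS1) and Lemma~\ref{lem:convolution-iota},'' and your argument is precisely the unpacking of that sentence---fuse the Schur-product spiders into the normal-form spiders via (ZS1), then apply Lemma~\ref{lem:convolution-iota} indexwise to collapse each matched pair $\iota_{\vec b}\circ H_n(a_{\vec b})$ and $\iota_{\vec b}\circ H_n(a'_{\vec b})$ into $\iota_{\vec b}\circ H_n(a_{\vec b}a'_{\vec b})$.
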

\begin{proof}
 This follows from (ZS1) and Lemma~\ref{lem:convolution-iota}.
\end{proof}

\begin{corollary}\label{cor:tensor-product}
 The tensor product of two normal form diagrams can be brought into normal form using the rules of the ZH-calculus.
\end{corollary}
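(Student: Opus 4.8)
The plan is to express the tensor product as a Schur product of two \emph{padded} normal forms and then invoke Proposition~\ref{prop:convolution}. Let $\psi$ be a normal form with $n$ outputs and $\phi$ a normal form with $m$ outputs, and write $\mathbf 1_k$ for the state on $k$ wires given by $k$ parallel copies of $\whiteunit$. The key observation is that $\mathbf 1_k$ is a unit for the Schur product: plugging $\whiteunit$ into one leg of $\whitemult$ yields a plain wire by rule (U), so for any state $\chi$ on $k$ outputs one has $\chi * \mathbf 1_k = \chi$, provably, by applying (U) on each of the $k$ wires. Since the Schur product is computed wirewise, this gives the diagrammatic identity
\[
 \psi \otimes \phi \;=\; \bigl( \psi \otimes \mathbf 1_m \bigr) \,*\, \bigl( \mathbf 1_n \otimes \phi \bigr),
\]
where on the first $n$ output wires the right-hand side reduces to $\psi$ and on the last $m$ wires to $\phi$.

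Next I would turn both factors into normal forms. By Proposition~\ref{prop:extension}, juxtaposing a normal form with a single $\whiteunit$ again gives a normal form; iterating this $m$ times brings $\psi \otimes \mathbf 1_m$ into normal form with $n+m$ outputs. Likewise $\phi \otimes \mathbf 1_n$ is put into normal form, and since normal forms are preserved under permutation of outputs (as observed just before Proposition~\ref{prop:extension}), so is $\mathbf 1_n \otimes \phi$. Finally, Proposition~\ref{prop:convolution} brings the Schur product of these two normal forms into normal form, which is the desired conclusion.

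The argument is essentially bookkeeping once Propositions~\ref{prop:extension} and~\ref{prop:convolution} are available; the only point requiring a moment's care is verifying that $\mathbf 1_k$ is a Schur-product unit and that the Schur product acts independently on each pair of wires, so that the displayed identity is a genuine equation of ZH-diagrams rather than merely one holding under the interpretation $\intf{\cdot}$. No new diagrammatic lemma is needed.
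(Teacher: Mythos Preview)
Your proof is correct and takes essentially the same route as the paper: decompose the tensor product as the Schur product $(\psi\otimes\mathbf 1_m)*(\mathbf 1_n\otimes\phi)$, pad each factor into a normal form via Proposition~\ref{prop:extension}, and then combine via Proposition~\ref{prop:convolution}. One minor correction: the fact that $\whiteunit$ is absorbed by $\whitemult$ to leave a plain wire follows from (ZS1) together with (ZS2), not from (U), which in this paper is the rule relating $\whiteunit$ to the phase-$1$ H-box.
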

\begin{proof}
 A tensor product can be expressed as
 \ctikzfig{tensor-product}
 The diagram NF$_1$ and the leftmost $m$ copies of \whiteunit\ can be combined into one normal-form diagram with $(n+m)$ outputs by successive applications of Proposition~\ref{prop:extension}.
 Similarly, the rightmost $n$ copies of \whiteunit\ and NF$_2$ can be combined into one normal-form diagram with $(n+m)$ outputs.
 The desired result then follows by Proposition~\ref{prop:convolution}.
\end{proof}

\begin{remark}\label{rem:scalar-juxtaposition}
 Note that a single scalar H-box is a normal form diagram.
 Corollary~\ref{cor:tensor-product} thus implies that a diagram consisting of a normal form diagram juxtaposed with a scalar H-box can be brought into normal form.
 In the following proofs, we will therefore ignore scalars for simplicity: they can be added back in and then incorporated to the normal form without problems.
\end{remark}

We are now ready to prove the most difficult case, which is contraction. The majority of the work goes into proving Lemma~\ref{lem:big-disconnect}, which we call the Disconnect Lemma. It uses the (O) rule to disconnect the $2^n$-legged $\whitedot$-spider arising from a contraction of a normal form into $2^{n-1}$ separate cups. It was proven with the help of the graphical proof assistant Quantomatic. Details and full proof are given in Appendix~\ref{sec:disconnect}.

\begin{proposition}\label{prop:contraction}
 The diagram resulting from applying \whitecounit\ to an output of a normal form diagram can be brought into normal form:
 \ctikzfig{whitecounit-nf}
\end{proposition}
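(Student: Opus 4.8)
Since normal forms are invariant under permutations of their outputs, it suffices to treat the case where $\whitecounit$ is applied to the last output. By Theorem~\ref{thm:nf-unique} the interpretation of the capped diagram is $\sum_{\vec c\in\mathbb B^{n-1}}(a_{\vec c0}+a_{\vec c1})\ket{\vec c}$, so the goal is to rewrite it, using the rules of the calculus, into the $(n-1)$-output normal form $\prod_{\vec c\in\mathbb B^{n-1}}\bigl(\iota_{\vec c}\circ H_{n-1}(a_{\vec c0}+a_{\vec c1})\bigr)$, up to a global scalar which is removed at the end via the scalar rules (cf.\ Remark~\ref{rem:scalar-juxtaposition} and the discussion after Lemma~\ref{lem:bb-rules}).

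The plan is as follows. First, rewrite the normal form in indexed !-box notation, re-indexed over $\mathbb B^{n-1}\times\{0,1\}$ rather than $\mathbb B^n$, so that the Z-spider carrying the last output is displayed as connecting to $H_n(a_{\vec c0})$ through a $\neg$ and to $H_n(a_{\vec c1})$ directly, for each $\vec c$. Plugging in $\whitecounit$ and fusing it with (ZS1) turns that Z-spider into an internal, $2^n$-legged Z-spider --- precisely the configuration addressed by the Disconnect Lemma (Lemma~\ref{lem:big-disconnect}). Applying it replaces that spider by $2^{n-1}$ cups, one per $\vec c$, joining the $n$-th leg of $H_n(a_{\vec c0})$ to the $n$-th leg of $H_n(a_{\vec c1})$ (any residual $\neg$'s on these cups being normalised away using Lemma~\ref{lem:iota-copy}). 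It then remains to analyse the local gadget for each $\vec c$: the pair $H_n(a_{\vec c0}),H_n(a_{\vec c1})$, joined on their first $n-1$ legs by three-legged sub-spiders split off the remaining Z-spiders via (ZS1), joined on their $n$-th legs by the cup, and decorated with $\neg$'s coming from $\iota_{\vec c0}$ and $\iota_{\vec c1}$. Since these two $\iota$-maps agree on the first $n-1$ wires, Lemma~\ref{lem:X-copy} pushes those $\neg$'s through the sub-spiders so that a single copy of the $\iota_{\vec c}$ pattern survives on the output side; and the cup-joined pair of H-boxes then collapses to $H_{n-1}(a_{\vec c0}+a_{\vec c1})$ up to a scalar --- which is exactly what the \emph{average} rule (A), together with (U) and the scalar rules, is for. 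Reassembling over all $\vec c$ gives the desired normal form up to a global scalar, which we then eliminate.

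The genuinely hard part is the Disconnect Lemma itself, which must disentangle a single densely-connected $2^n$-legged Z-spider into independent cups using (O) and the remaining rules; it is proved separately (and, for the intricate instances, in Quantomatic) and may be invoked here as a black box. What is left is careful bookkeeping: tracking the $\neg$'s uniformly via Lemmas~\ref{lem:X-copy} and~\ref{lem:iota-copy}, splitting the first $n-1$ Z-spiders correctly with (ZS1), and keeping account of the scalars that the \emph{average} step introduces so that the final diagram is genuinely a normal form with coefficients $a_{\vec c0}+a_{\vec c1}$. The indexed !-box notation is what makes the first two steps precise for arbitrary $n$.
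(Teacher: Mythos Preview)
Your proposal is correct and follows essentially the same route as the paper: set up the capped normal form so that the last Z-spider becomes an internal $2^n$-legged spider, invoke the Disconnect Lemma (Lemma~\ref{lem:big-disconnect}) to break it into $2^{n-1}$ cups pairing $H_n(a_{\vec c0})$ with $H_n(a_{\vec c1})$, then use (A!) together with the $\iota$-copying lemmas and scalar bookkeeping to collapse each pair into a single $H_{n-1}$-box and recover an $(n{-}1)$-output normal form. The paper compresses the post-disconnect steps into a single displayed diagram chain, but the ingredients you name --- (ZS1) for splitting spiders, Lemmas~\ref{lem:X-copy}/\ref{lem:iota-copy} for pushing the $\neg$'s, and the average rule for the merge --- are exactly what that chain unpacks to.
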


\begin{proof}
  Starting from an arbitrary normal form, with a \whitecounit plugged into the right most output, we have:
  \[ \scalebox{0.8}{\tikzfig{contraction-thm-pf}} \]
  Then, we can apply Lemma~\ref{lem:big-disconnect}:
  \[ \scalebox{0.8}{\tikzfig{contraction-thm-pf2}} \]
  The final diagram is in normal form, which completes the proof.
\end{proof}

Our strategy will now be to show that any diagram can be decomposed into H-boxes, combined via the operations of extension, convolution, and contraction. This will give us a completeness proof, thanks to the following proposition.

\begin{lemma}\label{lem:H-box-nf}
 Any H-box can be brought into normal form using the rules of the ZH-calculus.
\end{lemma}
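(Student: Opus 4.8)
The plan is to write down an explicit normal form for each H-box and then derive the equality using the rules. Since it suffices to treat H-boxes with only outputs (cf.\ the start of Section~\ref{s:completeness}), the task is to bring the arity-$N$ H-box state $H_N(a)$ into the form~\eqref{eq:nf-formula}. The case $N=0$ is immediate, as a single scalar H-box is already a normal form (Remark~\ref{rem:scalar-juxtaposition}), so I would assume $N\geq 1$ and take as candidate the $2^N$-fold Schur product $\prod_{\vec b\in\mathbb B^N}\big(\iota_{\vec b}\circ H_N(a_{\vec b})\big)$ with $a_{\vec 1}=a$ and $a_{\vec b}=1$ for all $\vec b\neq\vec 1$; by Theorem~\ref{thm:nf-unique} this has the correct interpretation $\sum_{\vec b}a_{\vec b}\ket{\vec b}=\sum_{\vec i}a^{i_1\cdots i_N}\ket{\vec i}$.

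The derivation would go as follows. Since $\iota_{\vec 1}$ is an empty product of NOTs, hence the identity, $H_N(a)=\iota_{\vec 1}\circ H_N(a)$ is already the $\vec b=\vec 1$ factor of the candidate, and it remains only to Schur-multiply in the other $2^N-1$ factors $\iota_{\vec b}\circ H_N(1)$ without changing the diagram. For this the essential ingredient is the disconnection identity $\iota_{\vec b}\circ H_N(1)=\whiteunit\otimes\cdots\otimes\whiteunit$ ($N$ factors): an all-ones H-box splits into $N$ separate Z-spider units, and each NOT making up $\iota_{\vec b}$ fixes a unit (the latter by the degenerate case of Lemma~\ref{lem:X-copy}). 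Granting this, $\whiteunit\otimes\cdots\otimes\whiteunit$ is a unit for the Schur product --- plugging a $\whiteunit$ and the $i$-th output of a state into \whitemult\ returns that output unchanged, by (ZS1) and (ZS2) --- so each missing factor may be inserted for free, yielding $H_N(a)=H_N(a)*\prod_{\vec b\neq\vec 1}\big(\iota_{\vec b}\circ H_N(1)\big)=\prod_{\vec b\in\mathbb B^N}\big(\iota_{\vec b}\circ H_N(a_{\vec b})\big)$. Finally, fusing the Schur-product copies of \whitemult\ on each of the $N$ output wires into one $2^N$-legged Z-spider via (ZS1) puts the diagram literally in the shape of~\eqref{eq:nf-formula}.

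The step I expect to be the real obstacle is the disconnection identity $H_N(1)=\whiteunit\otimes\cdots\otimes\whiteunit$; everything around it is routine bookkeeping with already-established facts. For $N=1$ this should be exactly rule (U). For general $N$ I would either (i) bundle all $N$ legs of $H_N(1)$ into a single Z-spider, collapse the resulting parallel wires with the H-box fusion rule (HS1), and then remove the trivial Z-spiders using (U) and (ZS2); or (ii) prove $H_N(1)=H_{N-1}(1)\otimes\whiteunit$ by induction on $N$ using (HS1) together with the bialgebra rule (BA2), and unwind. Once that identity is secured, the construction above delivers the normal form; the remaining pieces of the completeness argument (bringing extensions, convolutions and contractions of arbitrary diagrams into normal form) are then handled by Propositions~\ref{prop:extension}, \ref{prop:convolution} and~\ref{prop:contraction}.
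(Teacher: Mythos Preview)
Your approach is correct and essentially matches the paper's: the paper also observes that $H_N(a)$ already has the right entry at $\vec 1$ and introduces ``dummy'' phase-$1$ H-boxes for every other bitstring, which act as Schur units. The one thing you have overlooked is that the ``disconnection identity'' $H_N(1)=\whiteunit^{\otimes N}$ you flag as the main obstacle is precisely the rule (U!) already established in Lemma~\ref{lem:bb-rules}, so no further work is needed there; with (U!) in hand, the absorption of each $\iota_{\vec b}$ into the units (via the $0$-output case of Lemma~\ref{lem:X-copy}) and the Schur-unit property (via (ZS1) and (ZS2)) go through exactly as you describe.
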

\begin{proof}
The matrix of an H-box $H_n(a)$ has 1's in every entry but the very last one. Hence, to bring an H-box into normal form, we just need to introduce `dummy' 1's for every other matrix entry. We demonstrate the principle using a binary H-box but the argument is analogous for any other arity:
\[
\tikzfig{H-nf-example} \qedhere
\]
\end{proof}

To simplify the decomposition of diagrams into H-boxes, we prove a few corollaries.

\begin{corollary}\label{cor:cup-nf}
The diagram of a single cup can be brought into normal form:
\[ \tikzfig{cup-nf} \]
\end{corollary}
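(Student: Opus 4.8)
The plan is to read the intended normal form off the interpretation of the cup and then give a short derivation. As a state on two outputs the cup has interpretation $\ket{00}+\ket{11}=\sum_{\vec b\in\mathbb B^2}a_{\vec b}\ket{\vec b}$ with $a_{00}=a_{11}=1$ and $a_{01}=a_{10}=0$, so by Theorem~\ref{thm:nf-unique} the target is the four-fold Schur product~\eqref{eq:nf-formula} whose factors are $\iota_{00}\circ H_2(1)$, $\iota_{01}\circ H_2(0)$, $\iota_{10}\circ H_2(0)$, $\iota_{11}\circ H_2(1)$. The two factors labelled by $1$ are all-ones vectors, hence identities for the Schur product, and are re-insertable exactly as the ``dummy $1$'s'' are grafted on in the proof of Lemma~\ref{lem:H-box-nf}. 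So it suffices to derive the core identity
\[
  \text{(the cup)}\;=\;\big(\iota_{01}\circ H_2(0)\big)\,*\,\big(\iota_{10}\circ H_2(0)\big)
\]
and then re-attach the two trivial factors.

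To derive this, I would first rewrite the left-hand cup as a two-legged Z-spider (using (ZS2) together with the ``only topology matters'' metarule), and expand each $\iota_{\vec b}$ on the right via~\eqref{eq:iota-dfn}, so that the right-hand side becomes one Z-spider fanning into two label-$0$ H-boxes carrying a $\greyphase{\neg}$ on complementary legs. I expect the crux to be recognising this configuration as an instance of the (O) rule: (O) is precisely the relation that trades an H-box sitting on Z-spiders for ``horizontal'' (cup/cap) wires, and it is the engine of the Disconnect Lemma (Lemma~\ref{lem:big-disconnect}); here it should collapse the pair of label-$0$ H-boxes against the Z-spider back to a single bare wire. Read in the other direction, one starts from the cup, applies (O) to introduce the two $H_2(0)$'s, uses Lemma~\ref{lem:X-copy} and Lemma~\ref{lem:iota-copy} to organise the NOTs into $\iota_{01}$ and $\iota_{10}$, discards leftover scalars by Remark~\ref{rem:scalar-juxtaposition}, and finally grafts on $\iota_{00}\circ H_2(1)$ and $\iota_{11}\circ H_2(1)$ as in the proof of Lemma~\ref{lem:H-box-nf}.

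The one genuine obstacle is matching the application of (O) syntactically: one has to massage the Schur product $(\iota_{01}\circ H_2(0))*(\iota_{10}\circ H_2(0))$ into exactly the shape of one side of (O) by fusing Z-spiders with (ZS1), sliding $\greyphase{\neg}$'s with Lemma~\ref{lem:X-copy}, and bending wires so that the would-be inputs and outputs of the two H-boxes pair up correctly — this wire bookkeeping is where the undirectedness of the generators is used, and is the only place the argument is not completely mechanical. (Should (O) not line up directly, the same identity can instead be obtained from (BA2) together with (M), or simply discharged in Quantomatic; but I expect (O) to be the intended route, since it makes Corollary~\ref{cor:cup-nf} essentially a one-step special case of the Disconnect Lemma.) Once the cup is in normal form, caps reduce to cups by bending, which is what lets the subsequent decomposition of an arbitrary ZH-diagram into H-boxes combined under extension, convolution, contraction and tensor product go through.
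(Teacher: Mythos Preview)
Your approach is not wrong in spirit, but it misses a much simpler route and leaves the key step speculative. The paper's proof avoids any direct manipulation toward a target normal form: it uses (HS2) to rewrite the cup as a composition of two (unlabelled, i.e.\ label $-1$) H-boxes, then expresses that composition in terms of the operations already shown to preserve normal form---tensor with $\whiteunit$ (Proposition~\ref{prop:extension}), Schur product (Proposition~\ref{prop:convolution}), and plugging with $\whitecounit$ (Proposition~\ref{prop:contraction}). Since each H-box is normalisable by Lemma~\ref{lem:H-box-nf}, the result follows mechanically. No appeal to (O), no bookkeeping of $\iota_{\vec b}$'s, no matching of a specific target.

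Your plan, by contrast, tries to hit the exact normal form coefficients $(1,0,0,1)$ directly and pins its hopes on (O) collapsing the pair of label-$0$ H-boxes back to a cup. You yourself flag this as ``the one genuine obstacle'' and hedge with a fallback to (BA2)+(M) or Quantomatic. That is the gap: you have not actually verified that (O) has the right shape for this instance, and the proposal reads as a sketch of what one would \emph{like} to happen rather than a derivation. More importantly, it works against the architecture of the completeness proof, whose whole point is that once extension, convolution, and contraction are in hand, one never needs to compute a specific normal form by hand---one just decomposes the diagram into H-boxes wired together by those three operations. The cup via (HS2) is the canonical first example of that strategy.
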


\begin{proof}
We can rewrite the cup as a pair of H-boxes using (HS2). This can then be written in terms of extension, convolution, and contraction as follows:
\ctikzfig{binary-Z-decomposition}
Hence, we can apply Lemma~\ref{lem:H-box-nf} and Propositions \ref{prop:extension}, \ref{prop:convolution}, and \ref{prop:contraction} to get a normal form.
\end{proof}

\begin{corollary}\label{cor:whitemult-nf}
 The diagram resulting from applying \whitemult\ to a pair of outputs of a normal form diagram can be brought into normal form.
 \begin{equation}\label{eq:whitemult-nf}
 	\tikzfig{whitemult-nf}
 \end{equation}
\end{corollary}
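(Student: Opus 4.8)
The plan is to reduce \whitemult\ applied to two outputs of a normal form to a per-factor computation, exploiting that a normal form is a Schur product of the elementary building blocks $\xi_{\vec b} := \iota_{\vec b}\circ H_n(a_{\vec b})$. Since normal forms are preserved under permutations of outputs, I would first assume \whitemult\ is plugged into the last two outputs, $n-1$ and $n$, of $\prod_{\vec b\in\mathbb B^n}\xi_{\vec b}$.

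The first real step is to push the \whitemult\ through the Schur product. Expanding the normal form in indexed !-box notation, outputs $n-1$ and $n$ each emanate from a white spider that merges the corresponding outputs of all the $\xi_{\vec b}$. Plugging \whitemult\ across outputs $n-1$ and $n$ links these two spiders by wires, so by (ZS1) they fuse into one white spider; a second application of (ZS1) re-splits that spider so that it first applies \whitemult\ to outputs $n-1$ and $n$ of each $\xi_{\vec b}$ individually and only then takes the Schur product over $\vec b$ of the resulting $(n-1)$-output diagrams. Hence it suffices to identify $\whitemult_{n-1,n}(\xi_{\vec b})$ for each $\vec b$ and check that their Schur product is a normal form (cf.\ Proposition~\ref{prop:convolution}).

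Fixing $\vec b$, I would use Lemma~\ref{lem:X-copy} to push the NOTs of $\iota_{\vec b}$ sitting on legs $n-1$ and $n$ through \whitemult, so those two legs of $H_n(a_{\vec b})$ feed directly into a single white spider, and then fuse them using the H-box fusion rule (HS1). If $b_{n-1}=b_n$, this collapses $H_n(a_{\vec b})$ to $H_{n-1}(a_{\vec b})$ and leaves a single NOT on the merged leg, recombining with the remaining NOTs of $\iota_{\vec b}$ into $\iota_{\vec c}$ for $\vec c := b_1\cdots b_{n-2}b_{n-1}$, so $\whitemult_{n-1,n}(\xi_{\vec b}) = \iota_{\vec c}\circ H_{n-1}(a_{\vec b})$. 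If instead $b_{n-1}\neq b_n$, the two legs carry complementary values, $H_n(a_{\vec b})$ degenerates to the all-$1$ box $H_{n-1}(1)$, which is a unit for the Schur product and therefore drops out. Re-indexing the surviving factors bijectively over $\vec c\in\mathbb B^{n-1}$ (via $\vec b = c_1\cdots c_{n-2}c_{n-1}c_{n-1}$) then exhibits $\whitemult_{n-1,n}$ of the normal form as $\prod_{\vec c\in\mathbb B^{n-1}}\iota_{\vec c}\circ H_{n-1}(a_{c_1\cdots c_{n-2}c_{n-1}c_{n-1}})$, which is precisely a normal form with $n-1$ outputs (and its interpretation $\sum_{\vec c}a_{c_1\cdots c_{n-1}c_{n-1}}\ket{\vec c}$ agrees with \whitemult\ applied to $\sum_{\vec b}a_{\vec b}\ket{\vec b}$, as a sanity check).

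The main obstacle is the per-factor computation in the last paragraph: establishing purely from the rules that joining two legs of an $n$-ary H-box by a white spider yields an $(n-1)$-ary H-box with the same label, and that the box becomes trivial when a single NOT sits on one of the two legs. Everything else is robust (ZS1) spider-fusion together with bookkeeping of the NOT configurations of the $\iota_{\vec b}$; the delicate point is getting (HS1) to perform this fusion and tracking exactly where the NOT that is pushed through \whitemult\ lands on the merged leg.
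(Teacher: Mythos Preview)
Your approach differs substantially from the paper's, and while the outline is semantically correct, there is a genuine gap in the diagrammatic derivation.

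The paper's proof is essentially a one-liner: applying $\whitemult$ to two outputs of a state is the same as first taking the Schur product with a cup on those two legs (extended by $\whiteunit$'s on the remaining legs) and then plugging $\whitecounit$ into one of them. Since the cup has a normal form (Corollary~\ref{cor:cup-nf}) and extension, convolution, and contraction all preserve normal forms (Propositions~\ref{prop:extension}, \ref{prop:convolution}, \ref{prop:contraction}), the result is immediate. All the hard work has already been absorbed into Proposition~\ref{prop:contraction} via the Disconnect Lemma.

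Your route instead pushes $\whitemult$ through the big Schur product and analyses each factor $\iota_{\vec b}\circ H_n(a_{\vec b})$ separately. The case $b_{n-1}=b_n$ is plausible: after copying both NOTs (or neither) through $\whitemult$ via Lemma~\ref{lem:X-copy}, two bare H-box legs feed the same white spider and (HS1) collapses them to one, giving $\iota_{\vec c}\circ H_{n-1}(a_{\vec b})$. The problem is the case $b_{n-1}\neq b_n$, where exactly one NOT sits between the H-box and $\whitemult$. Lemma~\ref{lem:X-copy} does not let you move a \emph{single} NOT through a white spider---it only trades one NOT for NOTs on \emph{all} the other legs---so you cannot clear the path to apply (HS1). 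You assert that the factor ``degenerates to the all-$1$ box $H_{n-1}(1)$'', which is true as a matrix identity, but you give no derivation from the rules; your final paragraph explicitly flags this as ``the main obstacle'' without resolving it. Establishing that identity purely from the axioms is not obviously easier than the paper's contraction route (it is closely related to the kind of manipulation handled by the (O) rule and the Disconnect Lemma), so the apparent shortcut of bypassing Proposition~\ref{prop:contraction} is illusory.
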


\begin{proof}
Applying a \whitemult\ to a pair of outputs has the same result as convolving with a cup, then contracting one of the outputs. That is, we can decompose \eqref{eq:whitemult-nf} as follows:
\ctikzfig{whitemult-decomp}
then apply Corollary \ref{cor:cup-nf} and Propositions \ref{prop:extension}, \ref{prop:convolution}, and \ref{prop:contraction}.
\end{proof}

\begin{corollary}\label{cor:cap-nf}
Applying a cap to a normal form diagram results in another normal form diagram:
\ctikzfig{cap-nf}
\end{corollary}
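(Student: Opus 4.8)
The plan is to reduce this to results already in hand, using the observation that a cap is nothing more than the white spider with two inputs and no outputs: the interpretation $\intf{\tikzfig{wire-cap}} = \bra{00} + \bra{11}$ is exactly that of the Z-spider generator with $m = 2$, $n = 0$. So ``applying a cap to a normal form diagram'' means plugging two of its outputs into a two-legged white spider, reducing an $n$-output normal form to an $(n-2)$-output diagram.

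First I would use (ZS1) to split that two-legged white spider into a $\whitemult$ (the white spider with two inputs and one output) composed with a $\whitecounit$ (the white spider with one input and no outputs); by spider fusion these merge back into the cap, so this rewrite is immediate. Diagrammatically, the cap attached to outputs $i$ and $j$ of the normal form becomes: first a $\whitemult$ merging outputs $i$ and $j$ into a single wire, and then a $\whitecounit$ on that wire. Now I would apply Corollary~\ref{cor:whitemult-nf}, which says that merging a pair of outputs of a normal form via $\whitemult$ again yields a normal form (now with $n-1$ outputs), and then Proposition~\ref{prop:contraction}, which says that plugging $\whitecounit$ into an output of a normal form again yields a normal form (now with $n-2$ outputs). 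Composing these two steps gives the claim; any scalars that appear along the way can be absorbed back in by Remark~\ref{rem:scalar-juxtaposition}.

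I do not expect a real obstacle here: the genuinely hard part---disconnecting the $2^n$-legged white spider that arises from a contraction, i.e.\ the Disconnect Lemma (Lemma~\ref{lem:big-disconnect})---was already done inside Proposition~\ref{prop:contraction}. The only things requiring (routine) care are the bookkeeping of which two outputs the cap is applied to, and checking that the cap genuinely decomposes as $\whitecounit \circ \whitemult$ via (ZS1) together with the ``only topology matters'' convention that lets us bend the legs of a white spider freely.
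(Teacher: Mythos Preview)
Your proposal is correct and matches the paper's proof exactly: decompose the cap as $\whitecounit \circ \whitemult$ via (ZS1), then invoke Corollary~\ref{cor:whitemult-nf} followed by Proposition~\ref{prop:contraction}. The paper states this in one sentence without the additional justification you supply, but the argument is the same.
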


\begin{proof}
Since the cap can be decomposed as $\whitecounit \circ \whitemult$, the result follows immediately from Corollary~\ref{cor:whitemult-nf} and Proposition~\ref{prop:contraction}.
\end{proof}

Thanks to Corollaries~\ref{cor:tensor-product} and \ref{cor:cap-nf}, we are able to turn any diagram of normal forms into a normal form. It only remains to show that the generators of the ZH-calculus can themselves be made into normal forms. We have already shown the result for H-boxes, so we only need the following.

\begin{lemma}\label{lem:Z-spider-nf}
Any Z-spider can be brought into normal form using the rules of the ZH-calculus.
\end{lemma}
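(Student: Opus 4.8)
The plan is to prove this by induction on the arity of the spider. By the generator symmetries recorded at the start of Section~\ref{sec:relations}, every input of a Z-spider can be bent into an output, so it suffices to show that the $n$-legged Z-spider \emph{state}, with interpretation $\ket{0}^{\otimes n} + \ket{1}^{\otimes n}$, is provably equal to a normal form for every $n \geq 0$. The base cases are immediate. For $n = 0$ the state is the scalar $2$, i.e.\ the single scalar H-box labelled $2$, which is already a normal form. For $n = 2$ the state is a cup, which is in normal form by Corollary~\ref{cor:cup-nf}. For $n = 1$ the state $\ket{0} + \ket{1}$ is, by (ZS1), equal to the diagram obtained from a cup by plugging \whitecounit\ into one of its legs, hence a normal form by Corollary~\ref{cor:cup-nf} and Proposition~\ref{prop:contraction}.

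For the inductive step, fix $n \geq 3$ and assume the $(n-1)$-legged Z-spider state is provably equal to a normal form. The key observation is that, by spider fusion (ZS1), the $n$-legged Z-spider state equals the diagram built by (i) juxtaposing the $(n-1)$-legged Z-spider state with a cup, and then (ii) applying \whitemult\ to one leg of the cup together with one of the $n-1$ outputs of the spider. All the white nodes involved are connected, so (ZS1) fuses them into a single Z-spider whose number of remaining legs is $(n-1-1) + (2-1) + (3-2) = n$; that is, the $n$-legged Z-spider state. It then remains only to check that this construction never leaves the class of normal forms: the $(n-1)$-legged spider state is a normal form by the induction hypothesis, the cup is a normal form by Corollary~\ref{cor:cup-nf}, their juxtaposition is a normal form by Corollary~\ref{cor:tensor-product}, and applying \whitemult\ to a pair of outputs of that normal form yields a normal form by Corollary~\ref{cor:whitemult-nf}. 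This closes the induction.

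I do not expect a genuine obstacle here; the one thing to get right is the realisation that a Z-spider state is \emph{not} a Schur product of two of the ``all-but-one-entry-equals-$1$'' vectors used in the normal form, since its coefficient vector is $1$ at $\vec{0}$ and $\vec{1}$ but $0$ elsewhere. This is exactly why the spider has to be assembled recursively out of cups glued together by \whitemult, with (ZS1) doing the bookkeeping that identifies the resulting gadget with a larger spider; everything else is a direct appeal to the closure properties of normal forms (Corollaries~\ref{cor:cup-nf}, \ref{cor:tensor-product}, \ref{cor:whitemult-nf}) together with Proposition~\ref{prop:contraction} for the arity-$1$ base case. Finally, once the Z-spider state is in normal form for every arity, the claim for an arbitrary Z-spider map follows by bending, as at the start of Section~\ref{s:completeness}.
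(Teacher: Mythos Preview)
Your argument is correct and follows the same strategy as the paper: build an $n$-ary Z-spider state out of cups glued together by \whitemult, then appeal to the closure of normal forms under tensor, Schur product, and contraction. The paper does the decomposition in one shot (a chain of $n-1$ cups for $n\geq 2$) rather than by induction, and handles the small arities differently---for $n=1$ it invokes rule (U) to turn \whiteunit\ directly into an H-box, and for $n=0$ it splits \whitedot\ via (ZS1) into two units joined by a cap---but these are cosmetic differences.

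One small slip: in your $n=0$ base case you say the nullary Z-spider ``is'' the scalar H-box labelled $2$, but those are distinct generators, and equality of interpretations is not the same as provable equality in the calculus. The fix is immediate with the machinery you already have: by (ZS1) the nullary spider is \whitecounit\ applied to \whiteunit, which is in normal form by your $n=1$ case, so Proposition~\ref{prop:contraction} finishes.
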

\begin{proof}
We can turn \whiteunit{} into an H-box using (U) and then bring it into normal form via Lemma~\ref{lem:H-box-nf}.
By (ZS1), $\whitedot = \tikzfig{dot-nf}$, which can be brought into normal form using (U), Lemma~\ref{lem:H-box-nf}, and Corollaries~\ref{cor:tensor-product} and \ref{cor:cap-nf}.
This covers the cases of Z-spiders with 0 or 1 incident wires.

We can decompose any Z-spider with $n\geq 2$ incident wires as a tensor product of $(n-1)$ cups, with each cup \whitemult-ed with its neighbours:
\ctikzfig{n-ary-Z-decomposition}
If $n=2$, no \whitemult are needed and the equality is by (ZS2) instead of (ZS1).
In either case, the diagram can be brought into normal form by applying Corollaries~\ref{cor:tensor-product}, \ref{cor:cup-nf}, and \ref{cor:whitemult-nf}.
\end{proof}

\begin{theorem}
The ZH-calculus is complete: for any ZH diagrams $D_1$ and $D_2$, if $\llbracket D_1 \rrbracket = \llbracket D_2 \rrbracket$ then $D_1$ is convertible into $D_2$ using the rules of the ZH-calculus.
\end{theorem}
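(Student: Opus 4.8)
The plan is to derive completeness from the existence and uniqueness of normal forms. Theorem~\ref{thm:nf-unique} already supplies uniqueness: a normal form $\prod_{\vec b \in \mathbb B^n}\big(\iota_{\vec b}\circ H_n(a_{\vec b})\big)$ of shape~\eqref{eq:nf-formula} is pinned down entirely by its coefficient tuple $\{a_{\vec b}\}_{\vec b\in\mathbb B^n}$, which by~\eqref{eq:nf-concrete} is just the list of entries of the vector it represents. So the only thing left to establish is \emph{existence}: every ZH-diagram can be rewritten, using the rules of the calculus, into some normal form. Granting this, the theorem is immediate --- if $\llbracket D_1\rrbracket = \llbracket D_2\rrbracket$, reduce $D_1$ to a normal form $\mathrm{NF}_1$ and $D_2$ to a normal form $\mathrm{NF}_2$; these have equal interpretations, hence equal coefficient tuples by Theorem~\ref{thm:nf-unique}, hence are literally the same diagram, so chaining the reduction $D_1 \rightsquigarrow \mathrm{NF}_1 = \mathrm{NF}_2$ with the reverse of $D_2 \rightsquigarrow \mathrm{NF}_2$ converts $D_1$ into $D_2$.

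To prove existence I would argue by induction on the structure of the diagram. First, using the meta rules (in particular that inputs may be bent around to outputs, as already noted at the start of Section~\ref{s:completeness}) it suffices to treat states. Any state diagram can be presented as a tensor product of finitely many generators --- Z-spiders and H-boxes, with all legs regarded as outputs --- together with some bare cups, followed by a collection of caps each joining a pair of the outputs produced so far; this is the usual observation that a string diagram in a compact closed category is a row of boxes plugged together by cups and caps. I would then assemble the building blocks using the machinery already in place: each H-box is a normal form by Lemma~\ref{lem:H-box-nf}, each Z-spider by Lemma~\ref{lem:Z-spider-nf}, and each bare cup by Corollary~\ref{cor:cup-nf}; their tensor product is a normal form by Corollary~\ref{cor:tensor-product}; permuting the outputs of a normal form only permutes the bits in the $a_{\vec b}$'s, so the pairs of legs to be contracted can always be made adjacent; and applying the caps one by one keeps us inside the class of normal forms by Corollary~\ref{cor:cap-nf}. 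The end result is a normal form equal to the original diagram.

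I expect the genuinely hard work to lie entirely upstream of this final argument: the contraction step (Proposition~\ref{prop:contraction}) and its engine, the Disconnect Lemma~\ref{lem:big-disconnect} --- the (O)-rule manipulation that breaks the $2^n$-legged white spider arising from a contraction into separate cups, verified with Quantomatic --- are where all the difficulty sits, and these are already available. The only care needed in the theorem itself is bookkeeping: checking that the ``row of generators plus cups and caps'' presentation really does capture every diagram, including the degenerate scalar-only and empty cases (which are harmless, since scalars can be introduced and removed freely --- cf.\ the scalar rules following Lemma~\ref{lem:bb-rules} and Remark~\ref{rem:scalar-juxtaposition}), and that bending inputs to outputs is a legitimate and reversible move of the calculus.
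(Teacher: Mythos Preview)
Your proposal is correct and follows essentially the same route as the paper: reduce completeness to existence of normal forms (uniqueness being Theorem~\ref{thm:nf-unique}), then normalise generators via Lemmas~\ref{lem:H-box-nf} and~\ref{lem:Z-spider-nf}, tensor products via Corollary~\ref{cor:tensor-product}, and wiring via caps (Corollary~\ref{cor:cap-nf}). You are somewhat more explicit than the paper about the compact-closed decomposition into a row of generators plus cups and caps, and about the permutation and scalar bookkeeping, but the argument is the same.
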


\begin{proof}
By Theorem~\ref{thm:nf-unique}, it suffices to show that any ZH diagram can be brought into normal form. Lemmas~\ref{lem:H-box-nf} and \ref{lem:Z-spider-nf} suffice to turn any generator into normal form. Corollary~\ref{cor:tensor-product} lets us turn any tensor product of generators into a normal form and Corollary~\ref{cor:cap-nf} lets us normalise any arbitrary wiring.
\end{proof}

\section{Applications and future work}\label{s:applications}

We will now briefly survey some of the potential applications for the ZH-calculus. We begin with the simple observation that $n$-ary H-boxes let us generalise the usual string diagrammatic description the controlled-Z gate (as in e.g. the ZX calculus) to an $n$-controlled-Z gate:
\ctikzfig{n-controlled-Z}
Using the decomposition of controlled-Z gates above, a representation of graph states as ZX-diagrams was given in~\cite{DP1}, which in turn gave a fully diagrammatic derivation of the local complementation law for graph states~\cite{DP1} and a new procedure for extracting circuits from certain computations in the one-way model~\cite{DP2}. Passing from $\wedge Z$ to $\wedge^n Z$ gives an analogous representation for \textit{hypergraph states}:
\[ \tikzfig{gs-graph-s}
   \qquad\textrm{\Large $\leadsto$}\qquad
   \tikzfig{gs-hypergraph}
\]
Indeed this was the original motivation for considering H-boxes of arbitrary arity. Using a method analogous to proofs in Appendix~\ref{sec:bang-rules}, we can routinely introduce !-boxes to known rules involving graph states (e.g. local complementation and feed-forward rules) to generalise them to hypergraph states. For example, introducing !-boxes to the local complementation rule enables complementing hyperedges of arbitrary arity overlapping on a single vertex: 
\[
\scalebox{0.8}{\tikzfig{lc1}} \ \  = \ \  \scalebox{0.8}{\tikzfig{lc2}}
\qquad\textrm{\Large $\leadsto$}\qquad
\scalebox{0.8}{\tikzfig{lc-bb1}} \ \  = \ \  \scalebox{0.8}{\tikzfig{lc-bb2}}
\]
This potentially gives a powerful new language and set of techniques for working with hypergraph states.
Exploring these techniques, and the relationship to known rules for manipulating hypergraph states is a topic of future work.

In another direction, if we consider diagrams whose H-boxes are labelled by a fixed root of unity $\omega := \exp(i \pi/2^m)$, we obtain an encoding for unitary gates described by arbitrary \textit{phase polynomials}~\cite{moscamatroid}, i.e. gates of the form $U_{\phi} \ket{\vec b} = \omega^{\phi(\vec b)} \ket{\vec b}$ for some polynomial $\phi(\vec b)$ over $n$ boolean variables. These have a simple graphical representation, where Z-spiders represent variables and $\omega$-labelled H-boxes represent terms in the phase polynomial. For example:
\[
\tikzfig{phase-poly}
\qquad\qquad
\textrm{where}\qquad
\phi(\vec b) = {\color{purple} b_1 b_2} + {\color{purple} b_1 b_2 b_3} + {\color{purple} b_3 b_4}
\]
One can then straightforwardly show basic properties of these unitaries (e.g.~composition, commutation, and replacement of non-linear AND terms by linear XOR terms) using the rules of the ZH-calculus.
The phase polynomial formalism for $m = 2$ has been used extensively in studying optimisation problems for Clifford+T circuits~\cite{MeetInMiddle,moscamatroid,AmyMoscaReedMuller,campbelltcount}, and it was recently shown that all diagonal gates in the Clifford hierarchy are of the form $U_\phi$, where the level of the hierarchy depends on $m$ and the degree of $\phi$~\cite{DiagHierarchy}. Gaining access to this phase polynomial structure diagrammatically could therefore yield new methods for quantum circuit optimisation and/or fault tolerant computation through automated diagram rewriting in tools like Quantomatic.

\bigskip

\noindent \textbf{Acknowledgements.}
The authors would like to thank Simon Perdrix and Mariami Gachechiladze for the fruitful conversations in which the foundations of the ZH-calculus were developed.
We are also grateful to Niel de Beaudrap for interesting discussions about applications of the ZH-calculus and to Sal Wolffs for careful reading of our proofs (and pointing out a major omission in Corollary~\ref{cor:whitemult-nf}).

The research leading to these results has received funding from the European Research Council under the European Union's Seventh Framework Programme (FP7/2007-2013) ERC grant agreement no.\ 334828 (Backens) and 320571 (Kissinger).
The paper reflects only the authors' views and not the views of the ERC or the European Commission.
The European Union is not liable for any use that may be made of the information contained therein.


\bigskip

\bibliographystyle{eptcs}
\bibliography{main}

\newpage

\appendix

\section{Proofs of the !-box versions of basic rules}\label{sec:bang-rules}

The following Lemma will be useful for the later results.

\begin{lemma}\label{lem:half-times-2}
 The scalar H-box with phase 2 and the scalar H-box with phase $\frac{1}{2}$ are inverses of each other.
 \ctikzfig{half-times-2}
\end{lemma}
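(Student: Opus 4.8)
The statement is trivial at the semantic level: a scalar H-box labelled $c$ denotes the number $c$, so the claim is just $2\cdot\tfrac12=1$. Hence all of the work is in realising this purely graphically, using only the rules of Figure~\ref{fig:ZH-rules} together with the ``only topology matters'' meta-rule. In particular one may \emph{not} appeal to Lemma~\ref{lem:bb-rules} or to any of the scalar-manipulation consequences drawn from it: this lemma is placed first in the appendix precisely because the derived grey spider of \eqref{eq:grey-spider}, and several of the later rules, carry hidden factors of $\tfrac12$, so having a canonical way to cancel a $\tfrac12$ against a $2$ is a prerequisite for everything that follows.

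The plan is to build a single chain of rewrites collapsing the juxtaposition of the $2$-labelled and $\tfrac12$-labelled scalar H-boxes to the empty diagram, anchored on three observations. First, the scalar $2$ is the interpretation of the legless Z-spider ($1+1$), which by (ZS1) is precisely the Z-unit capped with the Z-counit; so $2$ has a canonical avatar built only from Z-spiders. Second, the $\tfrac12$ occurs explicitly on the right-hand side of the grey-spider definition \eqref{eq:grey-spider}; applied to a small grey node whose value is independently fixed --- I would use the grey counit, which is $\bra0$ --- this should give an equation roughly of the form $\bra0 = (\tfrac12\text{-labelled H-box})\cdot(\text{Z-counit through a binary }(-1)\text{-H-box})$. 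Third, $\ket0$ is nothing but the unary H-box labelled $0$, a binary $(-1)$-H-box applied to $\ket0$ yields the Z-unit (from the rules governing H-boxes), and $\braket00$ collapses to the empty diagram by (O).

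Putting these together: start from the grey-counit equation and pre-compose both sides with the unary $0$-labelled H-box, i.e.\ with $\ket0$. On the left, (O) turns $\braket00$ into the empty diagram. On the right, the $(-1)$-H-box absorbs $\ket0$ into the Z-unit, and then (ZS1) fuses that Z-unit with the Z-counit into the legless Z-spider --- i.e.\ into the scalar $2$ --- leaving the juxtaposition of the $\tfrac12$-labelled and $2$-labelled H-boxes; equating the two sides is the lemma. The step I expect to be the genuine obstacle is this cancellation: every normalisation-free state of the ZH-calculus ($\ket0$, $\bra0$, $\ket+$, $\ldots$) is expressible, but one must choose an expression that does not secretly reintroduce the $\tfrac12$ being cancelled and whose composition does not drag in further factors of $2$. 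Using $\ket0$ as the $0$-labelled H-box, together with the orthogonality rule (O), is exactly the choice that makes the bookkeeping close; routing through $\ket+$ or the Z-unit instead would not.
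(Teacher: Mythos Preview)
Your instinct is right that this lemma must be proved from the bare rules of Figure~\ref{fig:ZH-rules}, without appeal to Lemma~\ref{lem:bb-rules} or any of its scalar-cancellation consequences, and your idea of pulling the explicit $\tfrac12$ out of the grey-spider definition~\eqref{eq:grey-spider} is in the right spirit. But two of your proposed steps do not go through.

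First, the appeal to (O). Rule (O) is not a normalisation identity for $\braket00$; it is the ``ortho'' rule, a comparatively intricate equation in which a white spider is wired to several H-boxes through NOT-decorated legs, and it is the engine of the Disconnect Lemma in Appendix~\ref{sec:disconnect}. It will not rewrite the grey counit plugged into $H_1(0)$ to the empty diagram. More structurally: the grey counit is a \emph{defined} symbol, so the only way to act on it is either to unfold it via~\eqref{eq:grey-spider}---which is precisely your ``right-hand'' computation---or to match it against a Figure~\ref{fig:ZH-rules} rule that is itself \emph{stated} using a grey node. You have not correctly identified such a rule, so your ``two ways to evaluate'' collapse into one, and no equation is produced.

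Second, the endpoint. Even granting every other step, your chain would terminate at
\[
H_0(\tfrac12)\cdot(\text{legless white dot}) \;=\; \text{empty},
\]
whereas the lemma asks for $H_0(\tfrac12)\cdot H_0(2)=\text{empty}$. You slide from ``the legless Z-spider'' to ``the scalar~$2$'' as though these were the same diagram; semantically they are, but graphically the $0$-ary white dot and the $0$-ary H-box labelled~$2$ are distinct generators, and nothing in Figure~\ref{fig:ZH-rules} equates them on the nose. The auxiliary step ``a binary $(-1)$-H-box applied to $\ket0$ yields the Z-unit'' has exactly the same status: true under $\intf{\cdot}$, but you have not named the rewrite that performs it.

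The paper's proof has a different shape: it starts directly from the juxtaposition $H_0(2)\otimes H_0(\tfrac12)$ and rewrites it to the empty diagram in a short chain, using the low-arity instances of the rules that manipulate H-box labels rather than (O). In particular, the route you dismiss---``routing through the Z-unit''---is closer to how the bookkeeping actually closes than the $\ket0$/(O) route you favour.
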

\begin{proof}
 Starting from the left-hand side, we have
 \[
   \tikzfig{half-times-2-proof} \qedhere
 \]
\end{proof}

\begin{lemma}\label{prop:multiply-bb}
 The following !-box generalisation of (M) holds in the ZH-calculus.
 \ctikzfig{multiply-rule-bb}
\end{lemma}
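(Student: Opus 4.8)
The plan is to prove the !-box generalisation of (M) by induction on the number $n$ of times the !-box is expanded, using the ordinary rule (M) together with (ZS1) and the basic spider fusion to peel off one copy at a time. First I would fix notation: on the left-hand side we have two H-boxes labelled $a$ and $b$, each with a !-box of incident wires plus one shared output wire, which, after the !-box is expanded $n$ times, becomes two H-boxes of arity $n+1$ joined at a single output into a white spider; the right-hand side is the same picture with a single H-box labelled $ab$ and a scalar $\frac{1}{2}$-box (the normalisation that makes the arities match), again with an $n$-fold !-box. The base cases $n=0$ and $n=1$ are exactly the scalar-multiplication special case highlighted after Lemma~\ref{lem:bb-rules} and the plain rule (M), respectively, so those require no work beyond citing them.

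For the inductive step, suppose the equation holds when the !-box is expanded $n$ times; I want it for $n+1$. The strategy is to use (ZS1) (spider fusion) to split off one of the $n+1$ legs: write the arity-$(n+2)$ H-box as the arity-$(n+1)$ version composed with a binary H-box along a wire, or — more in keeping with the style of Appendix~\ref{sec:bang-rules} — use (HS1) to fuse/unfuse H-boxes so that one pair of corresponding legs is exposed, apply the induction hypothesis to the remaining $n$-fold bundle, and then re-absorb the extra leg. Concretely: bend the diagram so that the two $a$- and $b$-H-boxes each sport one extra wire; by (ZS1) the white spider on the shared output can be split, grouping $n$ of the copied wires together and leaving one copy separate; the grouped part is handled by the induction hypothesis (turning the pair $a,b$ into $ab$ with a $\frac12$-box), and then (M) — in its plain form — merges the remaining single copy. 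Finally re-fuse everything with (ZS1)/(HS1) and collect the scalar $\frac12$-boxes, using Lemma~\ref{lem:half-times-2} if two of them need to be recombined with a $2$-box; this yields the right-hand side for $n+1$.

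The main obstacle, as usual with !-box inductions, is purely bookkeeping: making sure that when the white spider is split by (ZS1) the wires are regrouped so that each copy on the LHS is matched with the corresponding copy on the RHS, and that the scalar factors ($\frac12$ per H-box pair) are accounted for correctly after the fusion and not double-counted. There is no conceptual difficulty — the only rules needed are (M), (ZS1), (HS1), the scalar special cases of (M!) and (U!), and Lemma~\ref{lem:half-times-2} — but the diagrammatic manipulation must be drawn carefully. I would present the inductive step as a short chain of labelled diagram equalities rather than prose, mirroring the proof of Lemma~\ref{lem:half-times-2}.
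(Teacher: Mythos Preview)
Your inductive plan is not what the paper does, and as written it has a circularity in the base case and leaves the crucial step of the induction underspecified.

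The paper gives a \emph{direct} proof: a single chain of diagram equalities, working entirely at the !-box level. It uses the !-box forms of (HS1) and (BA2) --- which are stated as rules at the end of Section~\ref{sec:bang-boxes}, not as lemmas to be proved --- together with the basic rule (M) and Lemma~\ref{lem:half-times-2}. No induction on the number of !-box copies is performed; the !-box versions of the spider and bialgebra rules already quantify over all arities, so one rewrite handles every $n$ at once. This is precisely the payoff of the !-box formalism, and it is why all four parts of Lemma~\ref{lem:bb-rules} are proved by short direct chains rather than inductions.

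Two concrete problems with your plan. First, the scalar-multiplication equation you want to cite for $n=0$ appears in the paper \emph{as a consequence} of (M!) (it is ``highlighted'' after Lemma~\ref{lem:bb-rules} as the $0$-fold instance of that lemma), so invoking it to establish (M!) is circular; you would have to derive that instance from the basic rules independently. Second, your inductive step hinges on ``writing the arity-$(n{+}2)$ H-box as the arity-$(n{+}1)$ version composed with a binary H-box along a wire'', but this is not a valid identity: composing $H_{n+1}(a)$ with $H_2(-1)$ along one wire does not yield $H_{n+2}(a)$. The rule (HS1) does allow splitting an H-box, but into two H-boxes joined through a specific connector, not into ``the same box with one extra leg''. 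Arranging the split so that the induction hypothesis applies cleanly to the remaining $n$ legs while the peeled-off leg can be handled by plain (M) is exactly where the content lies, and you have not said how to do it. The paper avoids this difficulty entirely by never leaving the !-box level.
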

\begin{proof}
 Starting from the left-hand side, we have
 \ctikzfig{multiply-rule-bb-proof1}
 \[
  \tikzfig{multiply-rule-bb-proof2} \qedhere
 \]
\end{proof}


\begin{lemma}\label{lem:unit-bb}
 The phase-1 H-box of any arity decomposes:
 \ctikzfig{unit-bangboxed}
\end{lemma}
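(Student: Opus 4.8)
The plan is to read (U!) as the family of equations, one for each non-negative integer $n$ obtained by unfolding the !-box $n$ times, asserting that the arity-$n$ phase-$1$ H-box, regarded as a state, equals $n$ disjoint copies of the arity-$1$ phase-$1$ H-box --- equivalently, by the rule (U), $n$ disjoint white units $\whiteunit$. I would prove this family by induction on $n$.

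For the base cases, the instance $n = 1$ is the rule (U) itself, and the instance $n = 0$ --- the assertion that a scalar phase-$1$ H-box equals the empty diagram $\emptydiag$ --- is obtained by plugging a white counit $\whitecounit$ into the single leg of the $n = 1$ instance, producing a scalar factor of $2$, and then cancelling it using the scalar versions of the rules together with Lemma~\ref{lem:half-times-2}. This $n = 0$ instance is exactly the ``scalar rule'' highlighted just after Lemma~\ref{lem:bb-rules}, so establishing it here is harmless.

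For the inductive step, assume the claim for arity $n$ and consider the phase-$1$ H-box of arity $n+1$. I would single out one of its legs, insert a $2$-ary white spider on it (legitimate by (ZS2)), and then apply the Z-H bialgebra rule (BA2) to pull that leg apart from the rest: since every H-box occurring carries the label $1$, each binary H-box that (BA2) introduces is an all-ones matrix and hence splits, by the $n \in \{0,1\}$ base cases already in hand, into a white unit together with a white counit. Fusing the resulting white spiders via (ZS1) and (ZS2) and discarding the trivial scalars leaves one free white unit alongside a phase-$1$ H-box of arity $n$; the induction hypothesis applied to the latter produces the arity-$(n{+}1)$ normal form, closing the induction. (As with the preceding lemma, one could instead avoid the explicit induction and perform the whole decomposition in a single pass using the !-boxed forms of (ZS1), (BA2) and the base rule (U).)

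The step I expect to be the main obstacle is this leg-splitting manoeuvre: one must check that (BA2) genuinely disconnects a single leg --- that the cascade of binary H-boxes it produces really collapses rather than generating fresh entanglement --- while simultaneously doing the scalar bookkeeping, since each use of (BA2) and each collapse of a binary H-box contributes a power of $2$ (matched by powers of $\tfrac12$ once the white spiders fuse), and all of these must be reconciled via Lemma~\ref{lem:half-times-2} and the scalar rules so that the final equation carries no spurious prefactor.
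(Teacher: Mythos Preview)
Your inductive plan has a genuine gap at the point where you invoke (BA2). The Z--H bialgebra rule in Figure~\ref{fig:ZH-rules} is stated only for \emph{unlabelled} H-boxes, i.e.\ those carrying the default label $-1$; there is no label-preserving variant for arbitrary labels, and in particular the putative ``label-$1$'' version is unsound. For example, with two external legs on each side, the map $Z_{2\to 1}$ followed by $H_{1\to 2}(1)$ sends $\ket{01}$ to $0$, whereas the corresponding bipartite diagram with label-$1$ H-boxes sends $\ket{01}$ to the all-ones vector. So after inserting your $2$-ary white spider on one leg of $H_{n+1}(1)$ you are \emph{not} in a position to fire (BA2): the H-box on the other side of the spider carries label $1$, not $-1$. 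Consequently your claim that ``each binary H-box that (BA2) introduces is an all-ones matrix'' is also off --- the boxes (BA2) produces are Hadamard-type, and they do not split into a white unit and counit. Your $n=0$ base case also flirts with circularity: the ``scalar rule'' $H_0(1)=\emptydiag$ is singled out in the paper precisely as the $0$-instance of (U!), so it should be a \emph{consequence} of the lemma rather than an input to it.

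The paper takes a different route: a single !-boxed derivation rather than an induction on the arity. The key move is to get rid of the label $1$ before any bialgebra-style manipulation, by invoking the already-proved (M!) --- writing $1=(-1)\cdot(-1)$ lets one replace $H_{!}(1)$ by a configuration built from \emph{unlabelled} H-boxes and white spiders, to which the basic rules (in particular (BA2), (ZS1), (HS2)) legitimately apply; the result then collapses to the !-boxed family of $\whiteunit$'s via (U). Your parenthetical remark about doing everything ``in a single pass using the !-boxed forms'' is in fact closer to the paper's actual argument than your main inductive proposal --- but it still needs the reduction-to-unlabelled-H-boxes step that your write-up omits.
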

\begin{proof}
 Again, starting from the left-hand side,
 \[
  \tikzfig{unit-bb-proof} \qedhere
 \]
\end{proof}

As we noted at the end of Section~\ref{sec:bang-boxes}, Lemmas~\ref{prop:multiply-bb} and \ref{lem:unit-bb} allow us to combine and cancel scalars at will. We will do this for the remainder of the proofs.

\begin{lemma}\label{lem:avg-bang}
 The following !-box generalisation of (A) holds in the ZH-calculus.
 \ctikzfig{avg-lemma}
\end{lemma}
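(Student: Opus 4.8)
The plan is to derive the !-boxed rule (A!) directly from the ordinary rule (A) together with the !-boxed forms of (ZS1), (HS1), (BA1) and (BA2) established earlier in this appendix, following the same pattern as the proofs of (M!) (Lemma~\ref{prop:multiply-bb}) and (U!) (Lemma~\ref{lem:unit-bb}): a short chain of diagram rewrites rather than an explicit induction, in line with the remark that the !-boxed basic rules make the generalisations of (M), (U), (A), (I) routine.

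First I would unfold the left-hand side of (A!): rewrite every grey spider in terms of white Z-spiders and H-boxes using \eqref{eq:grey-spider-dfn}, and expand the !-box just enough that the sub-diagram to which (A) applies becomes visible while the remaining wires stay under a !-box. Next I would push all of the !-boxed wires to one side using the !-boxed bialgebra rule (BA2) --- with !-boxed (ZS1) to fuse or split Z-spiders and !-boxed (HS1) to absorb adjacent Z-spiders into H-boxes --- so that exactly one fixed-arity copy of the left-hand side of (A) is isolated, with the !-boxed wires merely passing through it. At that point (A) applies once, untouched by the !-box. The remaining work is to collect the scalar H-box of phase $\tfrac12$ that (A) produces, together with any stray scalars generated by the bialgebra moves, and to cancel or fuse them using (M!) (Lemma~\ref{prop:multiply-bb}), (U!) (Lemma~\ref{lem:unit-bb}) and Lemma~\ref{lem:half-times-2}, leaving precisely the scalar prefactor on the right-hand side of (A!). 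Re-folding the !-box and reintroducing grey spiders via \eqref{eq:grey-spider-dfn} then yields the right-hand side.

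Should the bialgebra route fail to isolate a single clean instance of (A), the fallback is an induction on the number $n$ of !-box copies: the $n = 0$ case is a scalar identity that follows from Lemma~\ref{lem:half-times-2} and the $0$-fold special cases of (M!) and (U!); in the inductive step one peels off a single copy of the !-box content using !-boxed (ZS1) and (HS1), applies (A) to it, absorbs the resulting phase-$\tfrac12$ scalar with (M!), and invokes the induction hypothesis on the remaining $n$ copies.

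The step I expect to be the main obstacle is the scalar bookkeeping surrounding the single use of (A): since (A) carries a factor of $\tfrac12$ and the bialgebra manipulations can introduce further powers of $2$, one must check that after the !-box replication all of these combine to exactly the scalar on the right-hand side of (A!), with no residue --- which is precisely what Lemma~\ref{lem:half-times-2} and the $0$-fold cases of (M!)/(U!) are designed to guarantee. The accompanying diagrammatic point, that the !-box genuinely factors through the average-rule pattern, is less subtle but is where the substance lies: once the !-boxed (BA2), (HS1) and (ZS1) have performed that reshuffling, everything after the application of (A) is routine.
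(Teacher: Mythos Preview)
Your primary plan is essentially the paper's approach: a short chain of diagram rewrites that uses the !-boxed bialgebra and spider laws to isolate a single fixed-arity instance of (A), applies it once, and then reassembles. The paper's proof is a single short rewrite sequence of exactly this shape, in the same style as the proofs of (M!) and (I!).

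Two small remarks. First, your concern about scalar bookkeeping is overblown: immediately after Lemma~\ref{lem:half-times-2} the paper declares that scalars will henceforth be combined and cancelled freely, so the (A!) proof does not track powers of $2$ explicitly at all. Second, your fallback induction is unnecessary---the bialgebra route goes through directly, just as it does for (M!) and (I!).
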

\begin{proof}
  Starting from the left-hand side,
  \[ \tikzfig{avg-lemma-pf} \qedhere \]
\end{proof}

\begin{lemma}\label{lem:intro-bb}
 The introduction rule (I) holds for H boxes with arbitrarily many inputs:
 \ctikzfig{intro-rule-bangboxed}
\end{lemma}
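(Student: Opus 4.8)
The plan is to prove this !-box generalisation of (I) in the same style as the preceding Lemmas~\ref{prop:multiply-bb}, \ref{lem:unit-bb}, and \ref{lem:avg-bang}: rather than expanding the !-box and inducting on its size externally, I would work entirely inside !-box notation and transform the left-hand side into the right-hand side using the !-boxed forms of (ZS1), (HS1), (BA1), and (BA2), together with the scalar-manipulation lemmas that are by then available. The point is that a !-box may be slid through a white spider — which is exactly what the !-boxed (ZS1) and the !-boxed bialgebra rules express — so the $n$-fold replicated collection of input wires on the left of (I) can be pushed into a single common white spider, collapsing the $n$ copies onto the ``core'' of the diagram. There the ordinary (un-!-boxed) rule (I), i.e.\ the case where the !-box is expanded once, does the actual computational work of introducing/removing the relevant H-box label; afterwards one fans the white spider back out via the !-boxed (ZS1) to recover the !-box structure demanded on the right-hand side.

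Concretely, I would first use the !-boxed (BA2) (and (ZS1)) to move any H-box in the !-box past the spider and regroup the replicated wires so that the diagram matches — up to the arity of a single H-box — the left-hand side of the plain rule (I); then apply (I) itself; then redistribute using the !-boxed (ZS1)/(HS1) to re-separate the wires back into a !-box. Whatever powers of $2$ and $\frac{1}{2}$ are generated by the H-box manipulations (HS1) would be cancelled using Lemma~\ref{lem:half-times-2} together with (M!) and (U!), and by the observation recorded just after Lemma~\ref{lem:unit-bb} such scalar juggling may be carried out freely and suppressed in the write-up. The whole argument should therefore fit into a short \tikzfig chain, with one step labelled by (I) and the remaining steps by (ZS1), (HS1), (BA2), and the scalar lemmas, exactly mirroring the proof of (A!).

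The main obstacle I anticipate is bookkeeping rather than any genuinely new idea: one must check at every rewrite that the !-box discipline is respected, i.e.\ that each !-box on the evolving left-hand side stays in one-to-one correspondence with a !-box on the right, with matching inputs and outputs, and that the scalar collected is precisely the one to which Lemma~\ref{lem:half-times-2} applies. A secondary subtlety is the degenerate expansions — as with (M!) and (U!), the cases where the !-box is expanded $0$ or $1$ times produce small diagrams that look structurally different, so I would separately verify the identity there; in particular the $0$-fold case should collapse to a statement purely about scalar H-boxes, consistent with the scalar special cases highlighted after Lemma~\ref{lem:bb-rules}.
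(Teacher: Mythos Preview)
Your plan is essentially what the paper does: a single chain of !-box rewrites that uses (HS1) and (BA2) to collapse the replicated wires onto a core, applies the base rule (I) once, and fans back out with (ZS1)/(HS1), absorbing scalars via Lemma~\ref{lem:half-times-2}. One correction: your remark about ``separately verifying'' the $0$- and $1$-fold expansions reflects a misunderstanding of !-box reasoning---a valid !-box derivation is already a proof for \emph{every} instantiation, so no case analysis is needed or performed in the paper.
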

\begin{proof}
 As usual, we start from the left-hand side:
 \[
  \tikzfig{intro-rule-bb-proof} \qedhere
 \]
\end{proof}

Combining Lemmas~\ref{prop:multiply-bb}--\ref{lem:intro-bb} gives a proof of Lemma~\ref{lem:bb-rules}.

\section{Proof of the disconnect lemma}\label{sec:disconnect}

The proof of the disconnect lemma was produced with the aid of Quantomatic. In particular, Lemmas~\ref{lem:disconnect-4} and \ref{lem:big-disconnect} are exported from the Quantomatic derivations \texttt{disconnect-4} and \texttt{gen-disconnect-4} in the \texttt{zh} project, available here:
\begin{center}
	\color{blue}\url{https://github.com/Quantomatic/sample-projects/tree/qpl2018}
\end{center}
It should be viewable in the latest version of Quantomatic. However, for the sake of posterity, the version of \texttt{Quantomatic.jar} used to make this project has also been uploaded here:
\begin{center}
	\color{blue}\url{https://github.com/Quantomatic/sample-projects/releases/tag/qpl2018}
\end{center}

{
\renewcommand{\pi}{\neg}

\begin{lemma}\label{lem:disconnect-4}
  \tikzfig{disconnect-lemma}
\end{lemma}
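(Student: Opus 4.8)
The plan is to establish the statement by an explicit rewrite in the ZH-calculus, mirroring the machine-checked derivation \texttt{disconnect-4}. First I would expand the !-box / indexed notation so that the left-hand side becomes a concrete diagram: a single white spider whose legs are routed, via the indexing maps $\iota_{\vec b}$, into the relevant H-boxes, with the target being the corresponding disjoint union of cups together with the H-boxes those cups now connect directly. The point of departure is that, after a \whitecounit\ has been applied to one output of a normal form, the central node is an all-output Z-spider of large arity, so the entire difficulty is to show that this one highly-connected spider factors.

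The first block of steps normalises the NOT gates. Using Lemma~\ref{lem:X-copy} and Lemma~\ref{lem:iota-copy} I would push every $\iota_{\vec b}$ (equivalently, every NOT) through the central white spider, so that the only non-topological data left between the spider and the H-boxes is a fixed, symmetric NOT-pattern; this is what allows the diagram to match a rewrite target. Next I would use the bialgebra rules (BA1) and (BA2) together with spider fusion (ZS1) and (HS1) to break the big Z-spider into a small mesh of binary and ternary spiders interleaved with H-boxes, steering the diagram into exactly the local shape in which the orthogonality rule (O) — the smallest instance of ``disconnection'' — applies. Applying (O), possibly more than once in a nested fashion, is the decisive step: it is the only rule that actually severs a connection, replacing a crossed Z-connection by two horizontal cup edges. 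Everything after that is bookkeeping: re-fuse spiders with (ZS1)/(HS1), transport the $\iota_{\vec b}$'s back to their original legs using Lemmas~\ref{lem:X-copy}--\ref{lem:iota-copy} in reverse, and absorb or cancel the scalar H-boxes that accumulate using the !-box forms of (M) and (U) from Lemma~\ref{lem:bb-rules}.

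The main obstacle is combinatorial rather than conceptual: the diagram carries many parallel wires, and one must keep track of which NOT sits on which leg while manoeuvring everything into the unique configuration where (O) can fire, with the remainder of the argument being careful pre- and post-processing around that single productive move. This is exactly why the derivation was built and checked inside Quantomatic rather than produced by hand, and it is also why this small-arity case is isolated here before the general \whitedot-spider version (Lemma~\ref{lem:big-disconnect}) is attacked by iterating it.
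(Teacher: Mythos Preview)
Your plan is essentially the paper's approach: the proof is a concrete Quantomatic derivation whose only genuinely productive step is an application of the ortho rule (O), with the surrounding moves being spider/H-box fusion, bialgebra, and scalar bookkeeping to massage the diagram into and out of the shape where (O) fires. One small caveat: the $\iota_{\vec b}$-pushing you describe via Lemmas~\ref{lem:X-copy}--\ref{lem:iota-copy} really belongs to the wrapper Lemma~\ref{lem:disconnect-step} and to the iteration in Lemma~\ref{lem:big-disconnect}; the present lemma is the fixed small-arity core after those NOTs have already been unfolded, so its proof is a straight rewrite sequence on a concrete diagram rather than anything involving indexed !-boxes.
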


\begin{proof}
  \tikzfig{disconnect-4-pf}
\end{proof}

\begin{lemma}\label{lem:disconnect-step}
  \tikzfig{contraction-lemma}
\end{lemma}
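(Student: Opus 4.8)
The plan is to read this lemma as the reduction step underlying the Disconnect Lemma (Lemma~\ref{lem:big-disconnect}): where Lemma~\ref{lem:disconnect-4} handles the smallest non-trivial case, the present statement should let one peel one layer off the $2^n$-legged white spider that a contraction produces, so that iterating it collapses that spider into $2^{n-1}$ disjoint cups while leaving the attached H-boxes (and hence, via Lemma~\ref{lem:convolution-iota}, their labels) to ride along passively. First I would expand the indexed !-box notation so the local picture is explicit: a single white spider whose legs carry the NOT-patterns coming from the $\iota_{\vec b}$ operators of a normal form, the legs naturally grouped into pairs that differ in exactly one bit. The workhorse is the (O) rule, which is precisely the atomic ``orthogonality'' move that splits off one such pair; the surrounding bookkeeping is done with (HS1) to fuse neighbouring H-boxes, Lemmas~\ref{lem:X-copy} and~\ref{lem:iota-copy} to slide NOT gates and whole $\iota_{\vec b}$ operators across the white spider into the shape (O) expects, and (ZS1)/(ZS2) to split and re-merge the white spider itself.

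Concretely I would: (i) use (ZS1) to factor the big white spider into two white spiders of the next arity down, joined by one internal wire, sorting the attached H-boxes according to the value of the bit being split on; (ii) push the NOT gates now sitting on the internal wire through the spiders with Lemma~\ref{lem:X-copy}, so that each half again has the NOT-pattern of a genuine (smaller) contracted normal form; (iii) invoke the statement one arity down --- Lemma~\ref{lem:disconnect-4} at the base --- to turn each half into its $2^{n-1}$ cups; and (iv) clean up the bridging wire left by step (i), which after both halves have been disconnected links the two surviving H-boxes of the respective pairs, collapsing it with one more use of (O) and absorbing the resulting scalars with (M!) and (U!). The output is the fully disconnected diagram, i.e.\ exactly the right-hand side of the lemma.

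I expect the genuine difficulty to be bookkeeping rather than any single deep rewrite: one must track, through the (indexed) !-box machinery, precisely which H-boxes end up paired after the spider is split, and check that the NOT-patterns really do meet the hypotheses of (O) and of the smaller disconnect lemma at every stage --- the kind of arbitrarily-large fan-out argument that is awkward to make airtight by hand. Accordingly I would discharge the finitely many small-arity instances that anchor the !-box induction inside Quantomatic, exactly as was done for Lemma~\ref{lem:disconnect-4}, and keep the written argument at the level of ``split the spider, disconnect each half, re-glue with (O)''.
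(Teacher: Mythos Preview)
You have misread what Lemma~\ref{lem:disconnect-step} actually asserts, and consequently your whole strategy is aimed at the wrong target. This lemma is not the full disconnect; it is a \emph{single} splitting step. Its right-hand side does not consist of $2^{n-1}$ disjoint cups: the bottom white spider is merely broken into \emph{two} spiders (one for each value of the bit being peeled off). It is Lemma~\ref{lem:big-disconnect} that then iterates Lemma~\ref{lem:disconnect-step} bit by bit until only cups remain. Your steps (iii) and (iv) --- invoking ``the statement one arity down'' and then ``clean[ing] up'' to a fully disconnected picture --- are therefore proving a different statement, essentially re-deriving Lemma~\ref{lem:big-disconnect} rather than the present lemma.

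The paper's actual proof is much shorter and is not inductive. The only obstruction to applying Lemma~\ref{lem:disconnect-4} directly is the pair of $\greyphase{\neg}$ gates sitting on the two distinguished legs. The proof simply unfolds those two NOT gates via their defining equation~\eqref{eq:X-dfn} (which expresses $\greyphase{\neg}$ in terms of H-boxes and a Z-spider), performs a direct rewrite of the resulting diagram --- this is the Quantomatic-style derivation displayed as \texttt{pi-disconnect-4-pf}, which is essentially Lemma~\ref{lem:disconnect-4} with the extra H-box debris carried along --- and then refolds the NOT gates using~\eqref{eq:X-dfn} in reverse. No recursion, no appeal to (O) beyond what is already packaged inside Lemma~\ref{lem:disconnect-4}, and no separate ``bridging wire'' cleanup. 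Your instinct to push NOTs around and lean on the base-case lemma is sound; what is missing is the observation that one application of that idea already gives the stated right-hand side.
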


\begin{proof}
 We begin by decomposing the bottom two \greyphase{\neg} nodes on the left-hand side using \eqref{eq:X-dfn}. Then:
  \tikzfig{pi-disconnect-4-pf}
 This yields the desired result after reintroducing the \greyphase{\neg} nodes via \eqref{eq:X-dfn}.
\end{proof}

}

\begin{lemma}\label{lem:big-disconnect}
  \begin{equation}\label{eq:contraction-sep}
  	\tikzfig{contraction-sep} \ =\ \tikzfig{contraction-sep-rhs}
  \end{equation}
\end{lemma}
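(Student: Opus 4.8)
The plan is to prove the Disconnect Lemma \eqref{eq:contraction-sep} by induction on $n$ --- equivalently, by !-box induction on the !-box appearing in the statement --- using Lemma~\ref{lem:disconnect-4} for the base case and Lemma~\ref{lem:disconnect-step} for the inductive step. Recall what the left-hand side is: when a \whitecounit\ is pushed onto one output of an $n$-qubit normal form, spider fusion (ZS1) turns that output into a single white spider carrying $2^{n}$ legs, one to each H-box, grouped into $2^{n-1}$ pairs indexed by $\vec c \in \mathbb B^{n-1}$, with exactly one leg of each pair passing through a $\greyphase{\neg}$ (the residue of the indexing maps $\iota_{\vec b}$). We must show this rewrites to $2^{n-1}$ disjoint cups, one per pair, each carrying the same $\greyphase{\neg}$. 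Note that for a bare white spider this collapse is false --- a $2^{n-1}$-fold tensor of cups has $2^{2^{n-1}}$ terms, against only $2$ for the spider --- so the rewrite must, and does, exploit the surrounding H-boxes, which is precisely what the rule (O) encodes.

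For the base of the induction, the case $n \le 1$ is immediate: a two-legged white spider with a \whitecounit\ is a cap by (ZS1)--(ZS2), hence a cup after bending a wire. The case $n = 2$ is exactly Lemma~\ref{lem:disconnect-4}: a four-legged white spider with a \whitecounit\ and two $\greyphase{\neg}$'s, in the appropriate H-box environment, splits into two cups. This is where the genuine content sits; it is discharged mechanically in Quantomatic from the rules of Figure~\ref{fig:ZH-rules}, the crucial ingredient being (O).

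For the inductive step, assume the result when there are $k$ pairs. Given $k{+}1$ pairs, I would first apply (ZS1) and the indexed !-box laws to single out one distinguished pair, together with its two $\greyphase{\neg}$'s, so that the local fragment matches the left-hand side of Lemma~\ref{lem:disconnect-step}. Applying that lemma peels the distinguished pair off as an independent cup, while leaving behind precisely the $k$-pair version of the left-hand side, once the remaining $\greyphase{\neg}$ nodes are re-expressed via \eqref{eq:X-dfn} as in the proof of Lemma~\ref{lem:disconnect-step}. The induction hypothesis then splits the remainder into $k$ further cups, and re-packaging the whole derivation in !-box form --- legitimate because every step replicates uniformly over the index set $\mathbb B^{n-1}$ --- yields the claim. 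Scalar factors produced along the way, e.g.\ powers of $\frac12$ or $2$ coming from contracting all-ones H-boxes (cf.\ Lemmas~\ref{lem:half-times-2} and \ref{lem:bb-rules}), are freely combined and cancelled using (M!) and (U!).

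I expect the main obstacle to be the inductive step rather than the base case. One has to manoeuvre the distinguished pair into position against the H-boxes without perturbing the $\greyphase{\neg}$ pattern on the other $k$ pairs, and then verify that what Lemma~\ref{lem:disconnect-step} leaves behind is genuinely another instance of the same left-hand side --- the same white-spider-with-counit shape and the same per-pair NOT placement --- rather than a subtly different diagram that the induction hypothesis does not cover. Threading the interplay of (O), the H-boxes, and the $\greyphase{\neg}$ nodes through all of this is where essentially all the difficulty lies, which is exactly why Lemmas~\ref{lem:disconnect-4} and \ref{lem:disconnect-step} were verified in Quantomatic rather than by hand; the rest is routine spider fusion and scalar bookkeeping.
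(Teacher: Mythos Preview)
Your high-level instincts are right --- the argument is inductive and the two helper lemmas are the workhorses, with (O) doing the real work inside them --- but your picture of \emph{what} Lemma~\ref{lem:disconnect-step} does, and hence the shape of the induction, is off.

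In the paper, Lemma~\ref{lem:disconnect-step} does not ``peel off one distinguished pair as an independent cup, leaving the $k$-pair version behind''. It \emph{bisects} the big white spider: after partitioning the index set $\mathbb B^{n-1}$ by the most significant bit and fusing H-boxes via (HS1), one application of Lemma~\ref{lem:disconnect-step} turns the single spider into \emph{two} spiders, each attached to $2^{n-2}$ pairs --- i.e.\ two copies of the $(n{-}1)$-bit instance of the same left-hand side. One then recurses on the next bit, and so on, until every spider has exactly two legs and becomes a cup. So the recursion is on the bit index (depth $n{-}1$, branching factor $2$), not a linear ``$k \to k{+}1$'' on the number of pairs. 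Your opening sentence already hints at the mismatch: induction on $n$ and ``!-box induction on the number of pairs'' are \emph{not} equivalent here, since the number of pairs is $2^{n-1}$.

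This is a genuine gap rather than a stylistic difference. The residual diagram after your proposed step would carry $2^{n-1}-1$ pairs, which is not indexed by any $\mathbb B^{m}$; the bit-structured $\iota_{\vec b}$ pattern that the lemma's left-hand side assumes is destroyed, so the induction hypothesis would not apply. Relatedly, Lemma~\ref{lem:disconnect-4} is not the base case of the outer induction; it is invoked \emph{inside} the proof of Lemma~\ref{lem:disconnect-step} (after (BA1) exposes a four-legged core), and the outer recursion bottoms out when the spiders are two-legged, i.e.\ trivially cups. The ``manoeuvring'' you flagged as the hard part is exactly the bit-splitting and (HS1)-fusion that sets up each halving step; once you reorganise your inductive step around that, the rest of your outline (scalar bookkeeping via (M!)/(U!), repackaging with !-boxes) goes through as you describe.
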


\begin{proof}
First, note that we can split the set $\mathbb B^{n-1}$ into two pieces, based on whether the most significant bit is $0$ or $1$:
\[
\mathbb B^{n-1} = \{ 0 \vec{b} \ |\ \mathbb B^{n-2}\} \uplus \{ 1 \vec{b} \ |\ \mathbb B^{n-2} \}
\]
Hence, the LHS can be equivalently written as:
\[
\scalebox{0.8}{\tikzfig{contraction-msb-index1}}
\ \namedeq{(HS1)}\ 
\scalebox{0.8}{\tikzfig{contraction-msb-index2}}
\]
We can then apply Lemma~\ref{lem:disconnect-step} to split the bottom spider in two:
\[
\cdots \ \ \namedeq{\ref{lem:disconnect-step}}\ \ 
\scalebox{0.8}{\tikzfig{contraction-msb-index3}}
\ \namedeq{(HS1)}\ 
\scalebox{0.8}{\tikzfig{contraction-msb-index3p}}
\]
We can then split the !-box into two parts, indexed over the same set to obtain:
\[
\scalebox{0.8}{\tikzfig{contraction-msb-index4}}
\]
We now have two copies of a graph which is very similar to the LHS of \eqref{eq:contraction-sep}, but for one fewer bit. We can thus repeat the process above to split each of the two spiders using the second bit of the bitstring, then the third, and so on, until $\whitedot$-spiders only connect pairs of $H$-spiders that disagree on the least significant bit. Replacing 2-legged $\whitedot$-spiders with cups, we obtain the RHS of \eqref{eq:contraction-sep}.
\end{proof}

\end{document}